\documentclass[a4paper,12pt]{article}
\usepackage{latexsym,amsmath,amsfonts,amsbsy,amscd,amsxtra,amsgen,amsopn,bbm,amsthm,amssymb}

\topmargin=10pt 
\oddsidemargin=10pt 
\evensidemargin=10pt
\textheight=8.8 truein
\textwidth=6.2 truein 
\parskip=5pt plus 2pt

\begin{document}
\title{\textbf{An absolute quantum energy inequality for the Dirac field in curved spacetime}}
\author{Calvin J. Smith\footnote{Electronic address: {\tt calvin.smith@ucd.ie}} \\ \textit{School of Mathematical Sciences, University College Dublin,} \\ \textit{Belfield, Dublin 4, Ireland}}
\date{15 May 2007}

\maketitle
\newtheorem{theorem}{Theorem}[section]
\newtheorem*{defn}{Definition}
\newtheorem{lemma}[theorem]{Lemma}
\newtheorem{corollary}[theorem]{Corollary}
\newtheorem{claim}[theorem]{Claim}
\newtheorem{prop}[theorem]{Proposition}
\newtheorem{proposal}[theorem]{Proposal}
\newtheorem{example}[theorem]{Example}
\newtheorem{summary}[theorem]{Summary}
\newtheorem{axiom}{Wald Axiom}

\newcommand{\supp}{\textrm{supp }}
\newcommand{\singsupp}{\textrm{singsupp }}
\newcommand{\QED}{\qed \\}
\newcommand{\R}{\mathbb{R}}
\newcommand{\Z}{\mathbb{Z}}
\newcommand{\C}{\mathbb{C}}
\newcommand{\N}{\mathbb{N}}
\newcommand{\M}{\mathcal{M}}
\newcommand{\D}{\mathcal{D}}
\newcommand{\U}{\mathcal{U}}
\newcommand{\E}{\mathcal{E}}
\newcommand{\s}{\mathcal{S}}
\newcommand{\open}{\mathcal{O}}
\newcommand{\mink}{\mathbb{M}}
\newcommand{\norm}{\parallel}
\newcommand{\hilbert}{\mathcal{H}}
\newcommand{\vol}{\mathrm{vol}}
\newcommand{\diag}{\mathrm{diag\,}}
\newcommand{\Char}{\mathrm{Char \,}}
\newcommand{\LIM}{\mathrm{l.i.m.}}
\newcommand{\loc}{\mathrm{loc}}
\newcommand{\nnnl}{\nonumber \\}
\newcommand{\ren}{\mathrm{ren}}
\newcommand{\normal}{\mathcal{N}}
\newcommand{\GL}{\mathrm{GL}_n(\C)}
\newcommand{\dvol}{\mathrm{dvol}}
\newcommand{\rmd}{\mathrm{d}}
\newcommand{\tr}{\mathrm{tr }}

\newcommand{\Cauchy}{\mathcal{C}}
\newcommand{\mbk}{\mathbf{k}}
\newcommand{\fin}{\mathrm{fin}}
\newcommand{\dirac}{\nabla\hspace{-3mm}{/}}
\newcommand{\W}{\mathrm{W}}
\newcommand{\cW}{\mathcal{W}}
\newcommand{\had}{\mathrm{H}}
\newcommand{\spin}{\mathrm{sp}}
\newcommand{\cosp}{\mathrm{cosp}}
\newcommand{\Spin}{\mathrm{Spin }}
\newcommand{\J}{\mathrm{J}}
\newcommand{\IN}{\mathrm{in}}
\newcommand{\OUT}{\mathrm{out}}
\newcommand{\vin}{\Omega^\IN}
\newcommand{\SM}{S\M}
\newcommand{\FM}{F\M}
\newcommand{\DM}{D\M}
\newcommand{\T}{\mathrm{t}}

\begin{abstract}
Quantum Weak Energy Inequalities (QWEIs) are results which limit
the extent to which the smeared renormalised energy density of a quantum field can be negative. On globally hyperbolic spacetimes the massive quantum Dirac field is known to obey a QWEI in terms of a reference state chosen arbitrarily from the class of Hadamard states; however, there exist spacetimes of interest on which state-dependent
bounds cannot be evaluated. In this paper we prove the first QWEI for the
massive quantum Dirac field on four dimensional
globally hyperbolic spacetime in which the bound
depends only on the local geometry; such a QWEI is known as an
absolute QWEI.
\end{abstract}

\section{Introduction}

When formulating the classical theory of general relativity it is necessary to impose
certain energy conditions on the source matter fields being considered.
The most commonplace of these energy conditions is the \textit{weak energy condition}, $T_{ab}k^ak^b \geq
0$ for every timelike vector field $k$, which entails that observers only
encounter positive energy densities. However, it has been known since 1965 that, unlike most classical physics models,
no (Wightman) quantum field theory can obey pointwise energy
conditions \cite{epstein}. Moreover, it is possible to show
that the negative energy density arising from a quantum field
theoretic source is unbounded (from below) in magnitude \cite{fewster lisbon}. This startling feature of quantum field theory is often used, in the context of the semi-classical Einstein equation $G_{ab}=8\pi G\langle T^\ren_{ab}\rangle_\omega$, to support so-called `designer spacetimes' like Alcubierre's warp drive \cite{alcubierre} or traversable worm hole geometries. Following Ford's
\cite{ford} observation that it is possible to bound the magnitude and
duration of the flux of negative energy of a quantum field source, work
began in earnest to prove that the averaged (expectation value of
the) energy density of a quantum field was bounded from below. A
suitable definition, sufficient for our purposes\footnote{A more
general, and rigorous, definition of a quantum energy inequality is
given in \cite{fewster paris}.}, is that a
worldline \textit{quantum weak energy inequality} (QWEI) is a result of the form
\begin{equation}\label{wqwei}
\int_\R \rmd \tau \, \langle \rho^\mathrm{ren} \rangle_\omega (\gamma(\tau)) F(\tau) \geq
- \mathcal{B} > -\infty
\end{equation}
where $F$ is some appropriately chosen sampling function, $\gamma:\R
\mapsto \M$ is a timelike worldline and $\langle
\rho^\mathrm{ren}\rangle_\omega$ is the (expectation value of the) energy density of the quantum
field in a state $\omega$. In this discussion we shall exclusively
consider the massive quantum Dirac field in a smooth four-dimensional globally hyperbolic
spacetime $(\M,g)$. Moreover, we shall only consider the Hadamard states
of the Dirac field as this is a sufficient class of states to
renormalise the stress energy density. 

Typically, the bound $\mathcal{B}$ featuring
in (\ref{wqwei}) is a function of another state of the
theory usually called a reference state; these QWEIs are known as \textit{difference QWEIs}. Due to the
work of Fewster and his collaborators \cite{dawson fewster, fewster verch}, difference
QWEIs are known in great generality for the Dirac field in curved spacetime. (For a brief review of QWEIs for other fields the reader is directed to section one of \cite{fewster smith} and the references therein). Difference
QWEIs have been instrumental in constraining the likelihood of designer
spacetime manifestation; however, there exist spacetimes on
which one does not know how to write down the closed form expressions
for states necessary for the evaluation of the difference QWEI bound.
Indeed, the warp drive is an example of a spacetime on which it is not
currently known how to obtain explicit expressions for Hadamard
states. Therefore, it is desirable to have a lower bound $\mathcal{B}$ which is
state independent and constructed only from the local geometry; such a
bound is known as an \textit{absolute QWEI}. Currently, for the Dirac
field, absolute QWEIs are known only for the conformally invariant
\cite{vollick} and massive field in two-dimensions \cite{dawson} and the massive field in
four-dimensional flat spacetime \cite{fewster mistry}. In this
discussion we state and prove the first absolute QWEI for the massive Dirac
field in arbitrary four-dimensional globally hyperbolic spacetime.  The argument is an adaptation of Fewster's earlier work with Verch \cite{fewster verch} and Dawson \cite{dawson fewster} and is to be viewed as a companion to the analogous result for the Klein-Gordon field \cite{fewster smith}.

Our result may be stated as follows: Let $(\M,g)$ be a classical curved four-dimensional spacetime.
Here $\mathcal{M}$ is a four-dimensional smooth manifold (assumed Hausdorff,
paracompact and without boundary) with a Lorentz metric $g_{ab}$ of signature
($+---$). Furthermore, we require $(\M,g)$ to be \textit{globally hyperbolic}, that is
$\M$ contains a Cauchy surface. In addition we assume that an orientation, time orientation and spin structure have been chosen. It may be shown that on such a background one may formulate the quantum Dirac field and a notion of Hadamard states. The essential feature of Hadamard states is that they all share a common singularity structure; in particular their two-point functions, and their Dirac adjoints, have a local and covariantly determined singular expansion. We denote the Hadamard series corresponding to the singularity structure of the Dirac two-point function $\cW_\omega$ by ${\,}^\psi H_k^{(+)}$ and that corresponding to the singular structure of the adjoint Dirac two-point function $\cW^\Gamma_\omega$ by ${\,}^\psi H^{(-)}_k$. The salient feature of such states is that one may define a finite stress energy density $\langle \rho^\fin \rangle_\omega(x)$ by using $\cW_\omega-{\,}^\psi H_k^{(+)}$ and a point-splitting prescription. Indeed, define an operator $\rho^\mathrm{split}$ such that the finite contribution to the energy density is given by $\rho^\textrm{split}$ acting on the regularised two-point function of the Dirac field $\cW_\omega - {\,}^\psi H_k^{(+)}$; i.e. $$\langle \rho^\fin \rangle_\omega (x) := \lim_{x'\rightarrow x} [\rho^\mathrm{split} ( \cW_\omega -{\,}^\psi H_1^{(+)} ) ](x,x') \, .$$  The precise form of $\rho^\textrm{split}$ is given in \S\ref{energy section}. The quantity $\langle \rho^\fin\rangle_\omega$ is equal to the renormalised energy density modulo a local curvature term; we shall return to this issue later after proving our main result in theorem \ref{main result}. Our result then reads for any real valued $f\in C^\infty_0(\R)$ and Hadamard state $\omega$
\begin{equation}
\int_\R \rmd \tau \, \langle \rho^\mathrm{ren} \rangle_\omega (\gamma(\tau)) f^2(\tau) \geq
- \mathcal{B} \, ,
\end{equation}
modulo local curvature terms, where $\mathcal{B}$ is of the form 
\begin{eqnarray}
\mathcal{B} &=& \int_{\R^+}\frac{\rmd\xi}{2\pi}\xi \bigg[ f\otimes f \, \vartheta^*{\,}^\psi\had^{(+)}_4 
\bigg]^\wedge(-\xi,\xi) \nnnl
&\,& \quad - \int_{\R^-}\frac{\rmd\xi}{2\pi} \xi \bigg[ f\otimes f
\,\vartheta^*\big(i\mathrm{S}_\spin-{\,}^\psi\had^{(+)}_4 \big) \bigg]^\wedge(-\xi,\xi) \, .
\end{eqnarray}
Here $\vartheta=\gamma\otimes\gamma$, $S_\spin$ is the fundamental solution to the Dirac equation, ${\,}^\psi\had^{(\pm)}_k$ are scalar distributions created from ${\,}^\psi H^{(\pm)}_k$ and $\hat{\,}$ denotes the Fourier transform which in our conventions is given by 
\begin{equation}
\hat{f}(\xi) = \int \rmd x \, f(x) \, e^{i\xi\cdot x} \, .
\end{equation}

The structure of this paper is as follows: In section \ref{dirac background} we present a review of the formulation of the classical (\S\ref{dirac background 1}-\ref{dirac background 2}) and quantum (\S\ref{quantum dirac hadamard}) Dirac fields and their Hadamard states. We then direct our attention to a microlocal description of the Hadamard series for the Dirac field in section \ref{microlocal review}; in particular we review the Sobolev wave-front set and its properties (\S\ref{microlocal review 1}) before applying the theory to the matter in hand and obtaining estimates on the singularities of the Hadamard series (\S\ref{microlocal results}). Finally, a point-splitting lemma is presented in section \ref{splitting section} before our main result is stated in section \ref{punch line}.

\section{The Dirac field in curved spacetime}\label{dirac background}

The reader who is familiar with the formalism necessary to describe the classical Dirac field on a curved background is encouraged to skip ahead to section \S\ref{quantum dirac hadamard}.

\subsection{Spin structures and spinors on curved spacetimes}\label{dirac background 1}

We begin by reviewing the geometry necessary to discuss the Dirac field
in a curved spacetime. We shall employ the algebraic framework for describing the Dirac quantum
field in a classical curved four-dimensional spacetime $(\mathcal{M},g)$.
Here $\mathcal{M}$ is a four-dimensional smooth manifold (assumed Hausdorff,
paracompact and without boundary) with a Lorentz metric $g_{ab}$ of signature
($+---$). Furthermore, we require $(\M,g)$ to be \textit{globally hyperbolic}, that is
$\M$ contains a Cauchy surface. Where index notation is
used, Latin indices will run over the range $0,1,2,3$ unless explicitly
stated otherwise, while Greek characters will denote frame indices and
also run over $0,1,2,3$ unless explicitly stated otherwise. We employ units
in which $c=\hbar=1$.

In Minkowski spacetime the spinors are nothing
more than the spin-half representation of the Poincar\'e group, however,
a general manifold does not exhibit this symmetry globally: Therefore, the usual (i.e. Minkowski spacetime) interpretation of a
spinor as being a quadruple of complex numbers at each point in spacetime does not
generalise under the replacement
$(\R^4,\eta)\mapsto (\M,g)$. A rigorous formulation of spinors on a manifold is
given in terms of fibre bundles where the spin group is the structure
group. (For a review of the necessary concepts related to fibre bundles, and in particular spinor bundles, the reader is directed to \cite{nakahara}). We shall review basic facts about the Dirac matrices and the
Lorentz and spin groups in Minkowski spacetime, and use a local frame to generalise the results to a curved spacetime. What follows is based on
\cite{dawson
fewster} and benefits from the
elaborations in \cite{dimock dirac, fewster verch}.

We begin by summarising several groups which appear in our discussion. The Lorentz group $O(1,3) = \{ \Lambda \in \mathrm{GL}_4(\R) \mid \eta_{\alpha\beta}\Lambda^\alpha_{\,\,\,\gamma}\Lambda^\beta_{\,\,\,\delta}=\eta_{\gamma\delta} 
\}$ has the subgroup $\mathfrak{L}^\uparrow_+$,
\begin{equation}
\mathfrak{L}^\uparrow_+ = \{ \Lambda \in O(1,3) \mid \det\Lambda=1 \,
\& \, \Lambda^0_{\,\,\,0}>0 \} 
\end{equation}
called the proper orthochronous Lorentz group. The Dirac gamma matrices $\gamma_\alpha$ satisfy the Clifford algebra relation  $\{\gamma_\alpha , \gamma_\beta  \} =
2\eta_{\alpha\beta}\mathbbm{1}$ and are said to belong to a \textit{standard
representation} if $\gamma^\dagger_0=\gamma_0$ and $\gamma^\dagger_{\alpha}=-\gamma_{\alpha}$ for $\alpha=1,2,3$.
From here on we shall assume that our Dirac matrices belong to a
standard representation.  The spin group, $\Spin(1,3)$, is defined by
\begin{equation}
\Spin (1,3) = \{ S \in \mathrm{SL}_4(\C) \mid S\gamma_\alpha S^{-1} =
\gamma_\beta \Lambda^\beta_{\,\,\,\alpha} \textrm{ for some }\Lambda\in\mathcal{L} 
\} \ ,
\end{equation}
and is known to be
a two-to-one cover of $\mathfrak{L}^\uparrow_+$, i.e. the mapping $S\mapsto \Lambda(S)$ is a two-to-one
covering homomorphism from the identity connected component
$\Spin_0(1,3)$, of $\Spin(1,3)$, to $\mathfrak{L}^\uparrow_+$ with kernel
$\{\mathbbm{1},-\mathbbm{1} \}$.

We now direct our attention to a curved spacetime setting. The frame bundle $\FM$ is the
bundle of oriented and time-oriented tetrads $\{ e_\alpha^a
\}_{\alpha=0,1,2,3}$ over spacetime $(\M,g)$ with the convention that
$e_0^a$ is a future pointing timelike vector; moreover, $\FM$ is a
principal $\mathcal{L}^\uparrow_+$ bundle whose right action is given by
$(R_\Lambda e)_\alpha = e_\beta \Lambda^\beta_{\,\,\,\alpha}$. A \textit{spin structure} on $(\M,g)$ is a principal
$\Spin_0(1,3)$ bundle, $\SM$, over $(\M,g)$ equipped with a fibre homomorphism
$\varphi:\SM\mapsto \FM$ such that $\varphi\circ
R_S=R_{\Lambda(S)}\circ\varphi$, i.e. $\varphi$ intertwines the right
action of the structure group on these bundles. Spin structures are not
unique, however two such structures, $\SM$ and $\widetilde{\SM}$ equipped
with $\varphi$ and $\widetilde\varphi$ respectively can be said to be equivalent if
there is an isomorphism $\iota : \SM \mapsto \widetilde{\SM}$ such that $\varphi=\widetilde{\varphi}\circ\iota$.

It is worth pointing out that spin structures
do not exist in general for an arbitrary manifold; their existence is
determined by the \textit{second Stiefel-Whitney class}. In essence, the requirement that the second Stiefel-Whitney class
vanishes ensures consistency between the (transition functions of the)
fibre group of the tangent bundle and the (lift to the transition functions of the) spin group. It is known that there exist spin
structures over orientable manifolds if and only if the second
Stiefel-Whitney class vanishes and that every four-dimensional globally hyperbolic manifold admits a spin
structure. We now assume that an arbitrary spin structure has been chosen and is
fixed for the remainder of this discussion.

We may now define spinor fields on a curved manifold by saying they are
sections of the associated $\Spin_0(1,3)$ bundle 
\begin{equation}
\DM = \SM \ltimes_{\Spin_0(1,3)}\C^4 \,  .
\end{equation}
The fibre of $\DM$ at $x\in\M$ is the equivalence class $[T,z]_x$ where $T\in S_x\M$
and $z\in\C^4$ is a \emph{column} vector and the equivalence relation is: $[\widetilde{T},\widetilde{z}]_x=[T,z]_x$ if and only
if $\widetilde{T}=R^{-1}_ST$ and $\widetilde{z}=Sz$ for some $S\in \Spin_0(1,3)$. The bundle
$D\M$ has fibre $\C^4$ at every point and left action given by $L_S
[T,z]_x = [T,Sz]_x$. The dual bundle
\begin{equation}
D^*\M=S\M\ltimes_{\Spin_0(\M,g)}{\C}^4 \, ,
\end{equation}
where ${\C}^4$
is the set of complex \emph{row} 4-vectors, is constructed similarly and its
fibres are the equivalence classes $[T,z^\T]_x^*$, $z^\T\in{\C}^4$ a row vector, such that $[\widetilde{T},\widetilde{z^\T}]_x^* =
[T,z^\T]_x^*$ if and only if $\widetilde{T}=R^{-1}_ST$ and $\widetilde{z^\T}=z^\T S^{-1}$ for some
$S\in\Spin_0(1,3)$. Just as the sections of $\DM$ are called \textit{spinors}, the sections of $D^*\M$ are called \textit{cospinors}. 
We shall refer to \textit{test spinors} as being the smooth and compactly supported sections of $D\M$, the space of which we denote $C^\infty_0(D\M)$; \textit{test cospinors} are similarly defined and are elements of $C^\infty_0(D^*\M)$. As expected, there
exists a natural pairing between spinors and cospinors: Set $v_x = [T,z_1^\T]_x^*$ and $u_x\in [T,z_2]_x$, then $v_x(u_x) = z_1^\T  z_2$ is a scalar.  

We are now in a position to define the Dirac adjoint operation ${\,}^+ :
D\M \mapsto D^*\M$ which is given by
\begin{equation}
[T,z]^+_x = [T,z^\dagger \gamma_0]^*_x \, .
\end{equation}

Any local section $E:\M\mapsto S\M$ of $S\M$ determines a local frame $e^a_\alpha$ by $\varphi\circ E$ and local sections $E_A$ of $D\M$, such that $E_A(x)=[E_x,z_A]$ where $\{ z_A \}_{A=0,1,2,3}$ is the canonical basis of
$\C^4$. The dual frames $e_a^\alpha$, $E^A$ are defined through $e^\alpha \cdot e_\beta=\delta^\alpha_\beta$ and
$E^A ( E_B)=\delta^A_B$. One may define a mixed tensor-spinor object $\gamma \in
C^\infty(T^*\M)\otimes C^\infty(D\M)\otimes C^\infty(D^*\M)$ by setting its components $\gamma^{\,\,\,A}_{\alpha\,\,\,\,\,B}$ in the frame
$e^\alpha_a\otimes E_A\otimes E^B$ equal to the matrix elements
$(\gamma_\alpha)^{A}_{\,\,\,\,B}$. For example, it can be shown that
\begin{equation}\label{split gamma 0}
\gamma_0=\delta^{AB}E_A\otimes E^+_B \, .
\end{equation}

\subsection{The Dirac equation}\label{dirac background 2}

The metric $g$ determines a connection $\Gamma$ in the usual way via the covariant derivative operator $\nabla: C^\infty(T\M)\mapsto C^\infty(T^*\M\otimes T\M)$. One may equally define a connection $\sigma$ and covariant derivative on the spinor and cospinor bundle, which we also denote by $\nabla$,
\begin{equation}
\nabla: \left\{  \begin{array}{c} C^\infty(D\M) \\ C^\infty(D^*\M)
\end{array}\right.  \mapsto \left\{  \begin{array}{c}
C^\infty(T^*\M\otimes D\M) \\ C^\infty(T^*\M\otimes D^*\M) \end{array}\right. \, .
\end{equation}
Given a local section $E$, $f\in C^\infty(D\M)$ may be decomposed
$f=f^AE_A$, then $\nabla_af$ has components
\begin{equation}
\nabla_\alpha f^A = \partial_\alpha f^A + \sigma^A_{\alpha B}f^B
\end{equation}
in the frame $e^\alpha_a\otimes E_A$ where the connection $\sigma$ has
elements given by $\sigma^A_{\alpha B} =
-\frac{1}{4}\Gamma^\beta_{\alpha\delta}\gamma_{\beta\,\,\,\,C}^{\,\,A}\gamma^{\delta
C}_{\,\,\,\,\,\, B}$.

We are now in a position to define the equation of motion the spinors
will satisfy, i.e. the Dirac equation. The Dirac operator 
\begin{equation}
\dirac : \left\{  \begin{array}{l} C^\infty(D\M) \\ C^\infty(D^*\M) \end{array}
\right. \mapsto
\left\{ \begin{array}{l} C^\infty (D\M) \\ C^\infty(D^*\M)
\end{array} \right.
\end{equation}
maps (co)spinor fields into (co)spinor fields by
\begin{eqnarray}
\dirac f &=& (\dirac f)^A E_A = \eta^{\alpha\beta}\gamma^{\,\,\,
A}_{\alpha\,\,\,\, B}(\nabla_\beta f^B) E_A \quad \forall f \in C^\infty(D\M)
\\
\dirac h &=& (\dirac h)_B E^B =
\eta^{\alpha\beta}(\nabla_\beta h_C)\gamma_{\alpha\,\,\,\, B}^{\,\,\, C} E^B
\quad \forall h \in C^\infty(D^*\M) \, .
\end{eqnarray}
The spinor field $f\in C^\infty(D\M)$ is said to satisfy the Dirac
equation if $(-i\dirac+\mu)f=0$ where the constant $\mu\geq 0$ is
interpreted as the mass of the field. Similarly, the cospinor field $h \in C^\infty(D^*\M)$ is said to
satisfy the Dirac equation if $(i\dirac+\mu)h=0$. 

Even though the Dirac operator is not normally hyperbolic it is possible
to find unique advanced and retarded fundamental solutions on arbitrary
globally hyperbolic spacetimes. The key element in this analysis is the
Lichn\'erowicz identity,
\begin{equation}\label{lich}
P = (-i\dirac +\mu)(i\dirac+\mu)
\end{equation}
where $P=\nabla^2+R/4+\mu^2$ is the
so-called supersymmetrically coupled Klein-Gordon operator for
spinors, which relates Dirac operators to normally hyperbolic ones. The $R$ featuring in (\ref{lich}) is the Ricci scalar. It
is known that there exist unique advanced $E^-_P$ and retarded
$E^+_P$ fundamental solutions to any normally hyperbolic operator $P$ on globally hyperbolic spacetimes.
Hence, for the spinor field, one has the following fundamental
solutions: $S^\pm_\textrm{sp}=(i\dirac+\mu)E^\pm_P$. To be explicit, $S^\pm_\textrm{sp}$ are continuous operators
$S^\pm_\mathrm{sp}: C^\infty_0(D\M)\mapsto C^\infty(D\M)$ such that 
\begin{equation}
(-i\dirac +\mu)S^\pm_\mathrm{sp} f =
S^\pm_\mathrm{sp}(-i\dirac+\mu)f =f 
\end{equation}
satisfying $\supp (S^\pm_\mathrm{sp} f)\subset J^\pm (\supp f)$.
Clearly, there is a similar construction for the cospinor field resulting in
$S^\pm_\textrm{cosp}$. The advanced-minus-retarded fundamental solution for
spinors is $S_\mathrm{sp}=S^-_\mathrm{sp}-S^+_\mathrm{sp}$ and for cospinors
$S_\mathrm{cosp}=S^-_\mathrm{cosp}-S^+_\mathrm{cosp}$. An additional
antilinear map $\Gamma :D^*\M\oplus D\M\mapsto D^*\M\oplus D\M$ acts by
\begin{equation}
\Gamma\left( \begin{array}{c} h \\ f \end{array} \right)
= \left( \begin{array}{c} f^+ \\ h^+\end{array} \right) \,  ,
\end{equation}
where  $f\in C^\infty_0(D\M)$ and $h\in C^\infty_0(D^*\M)$. The map $\Gamma$ makes explicit the symmetry between spinors and
cospinors in this dual setting.

\subsection{The quantum Dirac field and Hadamard states}\label{quantum dirac hadamard}

To define the field algebra we denote by $\mathcal{D}(D^*\M \oplus D\M) =
C^\infty_0(D^*\M)\oplus C^\infty_0(D\M)$ the space of all test cospinors and
test spinors on which the operators 
\begin{equation}
D := i\left(   \begin{array}{cc} i\dirac +\mu & 0 \\ 0 & -i\dirac + \mu \end{array}\right) \, \quad S := i\left(   \begin{array}{cc}
S_\mathrm{cosp} & 0 \\ 0 & S_\mathrm{sp}  \end{array}\right)
\end{equation}
act. The elements $\mathcal{F}\in\mathcal{D}(D^*\M\oplus D\M)$ may be used to label a set of abstract
objects $\{ \Psi(\mathcal{F}) \mid \mathcal{F} \in
\mathcal{D}(D^*\M\oplus D\M)\}$ which, when equipped with $\mathbbm{1}$, generates a unital *-algebra $\mathfrak{F}$. We define
the \textit{algebra of smeared fields} $\mathfrak{F}(\M,g)$ to be
$\mathfrak{F}$ quotiented by \\
\begin{tabular}{rl}
i) & Adjoint, $\Psi (\mathcal{F})^* = \Psi(\Gamma\mathcal{F})$; \\
ii) & Linearity, $\Psi(\alpha_1 \mathcal{F}_1 + \alpha_2\mathcal{F}_2) =
\alpha_1\Psi(\mathcal{F}_1)+\alpha_2\Psi(\mathcal{F}_2)$; \\
iii) & The field equation, $\Psi(D\mathcal{F})=0$; \\
iv) & Canonical anticommutation relation, $\{\Psi(\mathcal{F}_1),\Psi(\mathcal{F}_2) \}=iS(\mathcal{F}_1,\mathcal{F}_2)\mathbbm{1}$. 
\end{tabular} \\
Here $\mathcal{F},
\mathcal{F}_1,\mathcal{F}_2\in\mathcal{D}(D^*\M\oplus D\M)$ and
$\alpha_1,\alpha_2\in\C$. It is relation (iv) that quantises the theory.

The usual Dirac field $\psi$ and its adjoint $\psi^+$ are special cases of the above
construction. For $h\in C^\infty_0(D^*\M)$, $f\in C^\infty_0(D\M)$ we define 
\begin{equation}
\psi(h) := \Psi\left( \begin{array}{c} h \\ 0 \end{array}  \right)  \quad
\textrm{and} \quad \psi^+(f) = \Psi \left( \begin{array}{c} 0 \\ f \end{array}  \right)
\end{equation}
which we interpret as smeared fields.

A \textit{state} $\omega : \mathfrak{F}(\M,g)\mapsto \C$ is a linear functional
which is positive, i.e. $\omega(A^*A)\geq 0$ $\forall A\in\mathfrak{F}(\M,g)$, and
normalised such that $\omega(\mathbbm{1})=1$. We shall restrict our attention to states for which the two-point function, defined by 
$$\omega(\Psi(\mathcal{F}_1)\Psi(\mathcal{F}_2)) \quad \forall \mathcal{F}_1,\mathcal{F}_2 \in \D(D^*\M\oplus D\M) \, ,$$ is a distribution on
$\mathcal{D}(D^*\M \oplus D\M)\otimes \mathcal{D}(D^*\M \oplus D\M)$. Associated to each state $\omega$ we define what we shall call the Dirac and Dirac adjoint two-point functions
$\cW_\omega, \cW^\Gamma_\omega \in
\D'(D\M\times D^*\M)$ respectively by 
\begin{equation}
\cW_\omega(f,h)=\omega
(\psi^+(f) \psi(h)) \quad \textrm{and} \quad
\cW^\Gamma_\omega(f,h)=\omega(\psi(h)\psi^+(f)) \, .
\end{equation}
As a consquence of the positivity of states we immediately have that 
\begin{equation}
\cW(f,f^+) \geq 0 \quad \forall f \in C^\infty_0(D\M) \quad \textrm{and} \quad \cW^\Gamma(h^+,h)\geq 0 \quad \forall h\in C^\infty(D^*\M) \, .
\end{equation}
The covariant anticommutation relation in terms of $\cW_\omega$ and $\cW^\Gamma_\omega$ is equally expressed
\begin{equation}
\cW_\omega + \cW^\Gamma_\omega = iS_\spin \, .
\end{equation}

It is clear from our discussion concerning the advanced and retarded
fundamental solutions of the Dirac field that one may use the
Lichn\'erowicz identity to define a notion of Hadamard state for the
Dirac field. In order to give the precise formulation of the Hadamard series
construction we must first discuss some geometry and here we follow
\cite{sahlmann2}. We denote by $\mathfrak{X}\subset \M\times\M$ the
set
\begin{eqnarray}
\mathfrak{X} &=& \{ (x,x^\prime)\in\M\times\M \mid  x,x^\prime \text{
are causally
related and} \nnnl &\,& \qquad J^+(x)\cap J^-(x^\prime) \,\textrm{and}\, J^-(x)\cap J^+(x^\prime)
\text{ are
contained} \nnnl &\,& \qquad \text{within a convex normal neighbourhood} 
\} \, .
\end{eqnarray}
Let $X\subset \mathfrak{X}$ be an open subset of
$\mathfrak{X}$ such that between any pair $(x,x^\prime)\in X$ there
exists a unique geodesic connecting them such that the
(signed) geodesic
separation of points defines a smooth function $\sigma$ on $X$. We make the additional
requirement that the Hadamard construction (to be described shortly) can be carried out on $X$. Subject to all these requirements, we call $X$ \label{notation reg dom}a \textit{regular
domain}. We define
two sequences of distributions $\{ H_k^{(\pm)}\}_{k=0,1,2,\dots} \in
\D^\prime (X)$ by 
\begin{eqnarray}\label{parametrix}
H_k^{(\pm)}(x,x^\prime) &=&  \frac{1}{4\pi^2} \bigg\{ \frac{\Delta^\frac{1}{2}(x,x^\prime)}{\sigma_\pm (x,x^\prime)} + \sum^k_{j=0}v_j
(x,x^\prime)\frac{\sigma^j(x,x^\prime)}{\ell ^{2(j+1)}}\ln\bigg(\frac{\sigma_\pm (x,x^\prime)}{\ell^2}\bigg)
\nonumber \\
&\,& \qquad 
+\sum^k_{j=0}w_j(x,x^\prime)\frac{\sigma^j(x,x^\prime)}{\ell^{2(j+1)}}\bigg\}
\, 
\end{eqnarray}
where we have introduced a length scale $\ell$ to make
$\sigma/\ell^2$ dimensionless.  By $F(\sigma_\pm)$, $F$ some function, we mean 
\begin{equation}
F(\sigma_\pm) = \lim_{\epsilon\rightarrow 0^+}F(\sigma_{\pm\epsilon}) 
\end{equation}
in the sense of distributions, where $\sigma_{\pm\epsilon}(x,x^\prime) =
\sigma(x,x^\prime)\pm 2i\epsilon(t(x)-t(x^\prime))+\epsilon^2$ and $t$ is a
time function on $X$. The functions $\Delta$, known as
the van Vleck-Morette determinant, $v_j$ and $w_j$ are found by fixing $x^\prime$
and applying $P\otimes\mathbbm{1}$ to $H_k^{(+)}$ and equating all the
coefficients of $1/\sigma_+$, $1/\sigma^2_+$, $\ln \sigma_+$ etc to
zero; moreover, they are all spinors (i.e., they carry internal indices). This
determines a system of differential equations known as the \textit{Hadamard
recursion relations}\footnote{The Hadamard recursions relations for the
scalar field can be found in \cite{fewster smith}.}. In $X$ the system of differential equations uniquely determines the $\{ v_j \}_{j=0,\dots,k}$
series. The $\{ w_j\}_{j=0,\dots,k}$ series is specified once the value of $w_0$ is
fixed; we adopt Wald's prescription that $w_0=0$ \cite{wald}.

Let $u$ be of Hadamard form for the operator $P$, i.e.
within a regular domain $X$ one has $u = H_k^{(+)}$ modulo
$C^k(X)$ for each $k\in\N$; the distribution $u$ is sometimes referred to
as the \textit{auxiliary two-point function}. A state $\omega$ on the algebra of smeared (Dirac) fields is
said to be \textit{Hadamard} if its associated two-point function $\cW_\omega$ is of the form
$\cW_\omega = (i\dirac+\mu)\otimes\mathbbm{1}u$ where $u$ is an
auxiliary two-point function. Consequently, we have, within a regular domain
$X$, that
\begin{equation}
\cW_\omega = {\,}^\psi H_k^{(+)} \quad \textrm{modulo}  \quad C^k(X)
\end{equation}
where ${\,}^\psi H^{(+)}_k =
(i\dirac+\mu)\otimes\mathbbm{1}H^{(+)}_{k+1}$. We also define ${\,}^\psi H^{(-)}_k = (i\dirac+\mu) \otimes\mathbbm{1}H^{(-)}_{k+1}$. Hence, within a regular domain
$X$, for any Hadamard state $\omega$ we have the following identities:  
\begin{eqnarray}\label{dirac hadamard relations}
\cW_\omega &=& {\,}^\psi H_k^{(+)} \quad \textrm{modulo } C^k(X) \, , \label{dirac hadamard relations 1} \\
\cW^\Gamma_\omega &=& -{\,}^\psi H^{(-)}_k \quad \textrm{modulo } C^k(X) \label{dirac hadamard relations 2}\, ,
\label{sign note}\\
iS_\spin &=& {\,}^\psi H^{(+)}_k - {\,}^\psi H^{(-)}_k \quad \textrm{modulo } C^k(X)  \, . \label{dirac hadamard relations 3}
\end{eqnarray}
Two remarks are in order: First, note that we require the sign in (\ref{sign note}) so as to ensure
that the anticommutation relation holds. Second, observe that as $k$ increases $\cW_\omega - {\,}^\psi H_k^{(+)}$ becomes more regular and that, for sufficiently high $k$, $\cW_\omega - {\,}^\psi H_k^{(+)}$ has a well defined coincidence limit.

\subsection{The stress energy tensor}\label{energy section}

We open this section with a few remarks about obtaining scalar distributions from spinorial ones as this will be the
basis of our analysis of the energy density for the remainder of our
discussion. Let $E_A$ be the spinor field derived from a local section $E$ of the
spin bundle $\SM$. Then one can derive matrices $\cW_{\omega AB}$ and $\cW^\Gamma_{\omega AB}$ from $\cW_\omega$ and $\cW^\Gamma_\omega$ via
\begin{equation}
\cW_{\omega AB}(f,f') = \cW_\omega(fE_A,f'E_B^+) \quad \cW_{\omega AB}^\Gamma (f,f') = \cW^\Gamma_\omega (fE_A,f'E^+_B)
\end{equation}
for all $f,f'\in C^\infty_0(\M)$. Scalar bi-distributions $\W_\omega$
and $\W^\Gamma_\omega$ (in $\D'(\M\times\M)$) may be constructed by taking the traces of the matrices, i.e.
$\W_\omega=\delta^{AB}\cW_{\omega AB}$ and $\W^\Gamma_\omega
= \delta^{AB}\cW^\Gamma_{\omega AB}$. Similarly, one may define a scalar
version ${\,}^\psi\had_k^{(\pm)}$ of the Hadamard series ${\,}^\psi H_k^{(\pm)}$. 

The stress energy tensor of the classical spin-1/2 field $\psi$ is
given by 
\begin{equation}
T_{ab} = \frac{i}{2}\bigg( \psi^+\gamma_{(a}\nabla_{b)}\psi - (\nabla_{(a}\psi^+)\gamma_{b)}\psi 
\bigg) \, 
\end{equation}
where the subscript parentheses denote symmetrisation, i.e $\tau_{(ab)}=(\tau_{ab}+\tau_{ba})/2$. As advertised, we shall
concentrate exclusively on proving an absolute QWEI along a timelike worldline. 
Therefore, we pick a properly parametrised smooth timelike worldline $\gamma:\R\mapsto \M$ and consider a
spacetime tube $\tau_\gamma\subset \M$ centred about it, a precise construction of this tube will be given shortly. In
$\tau_\gamma$ we may construct a tetrad $\{e^a_\alpha
\}_{\alpha=0,1,2,3}$ such that $e^a_0 = \dot{\gamma}^a$ and the remaining $\{ e^a_\alpha\}_{\alpha=1,2,3}$ are orthonormal
to this vector. Then, in the (dual) frame $e^\alpha_a\otimes e^\beta_b$, the tensor $T_{ab}$ has components $T_{\alpha\beta}$ given by
\begin{equation}
T_{\alpha\beta} = \frac{i}{2}\bigg(
\psi^+\gamma_{(\alpha}\nabla_{\beta)}\psi - (\nabla_{(\alpha}\psi^+)\gamma_{\beta)}\psi  
\bigg) \, 
\end{equation}
from which follows the energy density $\rho$:
\begin{equation}
\rho = \frac{i}{2}\bigg(
\psi^+\gamma_0\nabla_{0}\psi - (\nabla_{0}\psi^+)\gamma_{0}\psi  
\bigg) \, .
\end{equation}

Recall from (\ref{split gamma 0}), that the spin frame $E_A$ and its Dirac conjugate satisfy $\delta^{AB}E_A\otimes E_B^+ =\gamma_0$. This enables
us to write the classical point-split stress energy density $\rho^\mathrm{split}$ of the Dirac field as
\begin{eqnarray}
\rho^\mathrm{split}(x,x') &=& \frac{i}{2}\delta^{AB}\bigg( \big( \psi^+E_A  \big)\otimes \big( E^+_B e_{0'}\cdot\nabla\psi\big) \nnnl
&\,& \qquad - \big( [e_0\cdot\nabla\psi^+]E_A \big)\otimes \big( E^+_B\psi \big)
\bigg)(x,x') \, .
\end{eqnarray}
One may then use $\rho^\mathrm{split}$ to define the distributional
point-split energy density, also denoted $\rho^\mathrm{split}$: Let
$f,f' \in C^\infty_0(\M)$ then 
\begin{eqnarray}
\rho^\mathrm{split}(f',f) &=& \frac{i}{2}\delta^{AB}\int_{\M\times\M}\dvol(x)\dvol(x') \nnnl
&\,& \quad (f'\otimes f)\bigg( \big( \psi^+E_A  \big)\otimes \big( E^+_B e_{0'}\cdot\nabla\psi\big) \nnnl
&\,& \qquad - \big( [e_0\cdot\nabla\psi^+]E_A \big)\otimes \big( E^+_B\psi \big)
\bigg)(x,x') \\
&=& \frac{i}{2}\delta^{AB} \bigg(
\psi^+(f'E_A)\psi(\nabla\cdot(e_{0'}fE^+_B)) \nnnl 
&\,& \qquad - \psi^+(\nabla\cdot(e_0f'E_A))\psi(fE^+_B) \bigg)
\label{class rho split dirac}
\end{eqnarray}
where we have expressed the right hand side in more traditional
distributional language and $u\nabla\cdot v$ denotes the distributional dual of $(\nabla
u)\cdot v$. It is now clear that the replacement of $\psi^+\otimes\psi$ in
(\ref{class rho split dirac}) by the two-point function of any
Hadamard state $\omega$ defines the expectation point-split energy
density $\langle \rho^\mathrm{split}\rangle_\omega$ of the
Dirac field in that state. For notational convenience, we
decompose $\langle \rho^\mathrm{split}\rangle_\omega$ into the operator
$(T^\mathrm{split}_{00'})^{AB}$ acting on $\cW_{\omega AB} - {\,}^\psi
H_{1 AB}^{(+)}$ where $(T^\mathrm{split}_{00'})^{AB}$ is given
by
\begin{eqnarray}
(T^\mathrm{split}_{00'})^{AB} = \frac{i}{2}\delta^{AB}\bigg(\mathbbm{1} \otimes
e_0\cdot\nabla - e_0\cdot\nabla\otimes\mathbbm{1}   \bigg)+ \Theta^{AB}
\end{eqnarray}
and $\Theta^{AB}$ is a term which depends only on the spin-connection.
The precise form of $\Theta^{AB}$ may be found in \cite{fewster verch} (eqn. (3.10) of
that reference) but does not affect our discussion due to a useful result\footnote{Lemma 4
of \cite{dawson fewster}.} which shows it is identically zero on a timelike
worldline under certain conditions which we
shall now motivate: Let
$\gamma:\R\mapsto \M$ be a timelike worldline with unit tangent vector
$\dot\gamma$. Pick a point $x$ on $\gamma\subset\M$ and construct a
local frame $\{ e^a_\alpha  \}_{\alpha=0,1,2,3}$ subject to
$e^a_0=\dot\gamma^a$ at $x$. As we shall only be concerned with averaging
along a compact subset of $\gamma$ we shall fix a closed interval
$I\subset\gamma(\R)$ such that $x\in I$. One may utilise Fermi-Walker transport to move
$\{ e^a_\alpha \}_{\alpha=0,1,2,3}$ along $I$ keeping
$e^a_0|_\gamma=\dot\gamma^a$. The salient feature of Fermi-Walker
transport is that it preserves angles, i.e. $\{
e^a_\alpha\}_{\alpha=0,1,2,3}$ remains an orthonormal family along $I$. Next, at each point $\gamma(s)$ along
$I$ consider the convex normal neighbourhood $\U$ orthogonal to $\dot\gamma(s)$
and for each $y\in\U$ parallel transport $\{ e^a_\alpha \}_{\alpha=0,1,2,3}$ along the
unique geodesic connecting $y$ to $\gamma(s)$. In this manner we `sweep out' a tube
$\tau_\gamma\subset \M$ in spacetime. Importantly, the local frame
(throughout $\tau_\gamma)$ is a local section of $F\M$ which may be
identified with a smooth section $E$ of $S\M$. The details of this
identification may be found in \S3 of \cite{fewster verch}. For our
purposes it is sufficient to know that, as a consequence of this
construction, the form of $(T^\mathrm{split}_{00'})^{AB}$ simplifies
when restricted to the diagonal. The precise statement, quoted from
\cite{dawson fewster} (lem.4) is:

\begin{lemma}
If $E$ is any local section of $S\M$ obtained in the above fashion
from the curve $\gamma$, then $\Theta^{AB}|_\gamma=0$.
\end{lemma}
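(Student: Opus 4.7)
The strategy is to exploit the explicit form of $\Theta^{AB}$ from equation (3.10) of \cite{fewster verch} together with the special vanishing properties of the spin connection $\sigma^A_{\alpha B} = -\tfrac{1}{4}\Gamma^\beta_{\alpha\delta}\gamma_{\beta\,\,\,\,C}^{\,\,A}\gamma^{\delta C}_{\,\,\,\,\,\, B}$ enforced by our frame construction. First I would compute the Ricci rotation coefficients $\Gamma^\beta_{\alpha\delta}$ of $\{e^a_\alpha\}$ on $\gamma$. Because the frame is parallel-transported along every spacelike geodesic orthogonal to $\dot\gamma$ emanating from $\gamma(s)$, every coefficient $\Gamma^\beta_{i\delta}|_\gamma$ vanishes for spatial frame indices $i\in\{1,2,3\}$, and hence $\sigma^A_{iB}|_\gamma = 0$. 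Along $\gamma$ itself, Fermi--Walker transport enforces $\nabla_{\dot\gamma}e_0 = a$ and $\nabla_{\dot\gamma}e_i = -(a\cdot e_i)\dot\gamma$, where $a$ is the acceleration of $\gamma$; the only surviving temporal coefficients are then $\Gamma^i_{00}|_\gamma = a^i$ and $\Gamma^0_{0i}|_\gamma = a^i$, and a short Clifford computation gives $\sigma_0|_\gamma = \tfrac{1}{2}a^i\gamma_0\gamma_i$.

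Secondly I would identify the form of $\Theta^{AB}$ at coincidence. By its origin, $\Theta^{AB}$ collects precisely those non-derivative pieces generated when each covariant derivative in $\rho = \tfrac{i}{2}\bigl(\psi^+\gamma_0\nabla_0\psi - (\nabla_0\psi^+)\gamma_0\psi\bigr)$ is split as $e_0\cdot\nabla = e_0\cdot\partial + \sigma_0$, analogously at $x'$, together with the cospinor rule $\nabla\psi^+ = \partial\psi^+ - \psi^+\sigma$. Collecting these contributions from both the $x$- and $x'$-slots of $\cW_\omega$ and restricting to the diagonal $x = x' = \gamma(\tau)$ recombines them into $\tfrac{i}{2}\delta^{AB}\{\gamma_0,\sigma_0\}$ acting on $\cW_{\omega AB}$.

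The conclusion then follows from the Clifford identities $\gamma_0^2 = \mathbbm{1}$ and $\gamma_0\gamma_i = -\gamma_i\gamma_0$ for $i\in\{1,2,3\}$:
\begin{equation*}
\{\gamma_0,\sigma_0\}\big|_\gamma = \tfrac{1}{2}a^i\bigl(\gamma_0^2\gamma_i + \gamma_0\gamma_i\gamma_0\bigr) = \tfrac{1}{2}a^i(\gamma_i - \gamma_i) = 0,
\end{equation*}
giving $\Theta^{AB}|_\gamma = 0$ as required. I expect the main obstacle to lie in the second step: one must carefully bookkeep each spin-connection term appearing in equation (3.10) of \cite{fewster verch} and verify that the $x$- and $x'$-slot contributions recombine precisely into the single anticommutator $\{\gamma_0,\sigma_0\}$; once this matching is carried out, the Clifford algebra of the final step is immediate.
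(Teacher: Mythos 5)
The paper does not actually reproduce the proof of this lemma; it is quoted verbatim from Lemma~4 of Dawson~\& Fewster \cite{dawson fewster}, whose argument proceeds by exactly the route you describe. Your proposal is therefore best judged on its own merits, and it is correct in substance: the frame construction forces $\sigma_0|_\gamma$ to be a pure boost generator, and a pure boost anticommutes with $\gamma_0$.

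A few remarks on the details. Your formula for Fermi--Walker transport has the wrong sign, $D^F_{\dot\gamma}V=\nabla_{\dot\gamma}V-(\dot\gamma\cdot V)a+(a\cdot V)\dot\gamma$, so $\nabla_{\dot\gamma}e_i=+(a\cdot e_i)\dot\gamma$ rather than $-(a\cdot e_i)\dot\gamma$; with the $(+---)$ signature this combines with the sign in $a\cdot e_i=-\delta_{ij}a^j$ to reproduce your stated coefficients, so your final $\sigma_0|_\gamma=\tfrac{1}{2}a^i\gamma_0\gamma_i$ is nonetheless correct (it is the boost generator $a^iS_{0i}$). Your opening observation that the spatial spin connection components $\sigma_i|_\gamma$ vanish is true but unnecessary for this lemma: $\Theta^{AB}$ is built solely from the expansion of $e_0\cdot\nabla$, so only $\sigma_0$ enters. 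And your identification of the coincidence limit of $\Theta^{AB}$ as $\tfrac{i}{2}\delta^{AB}\{\gamma_0,\sigma_0\}$ is slightly imprecise in its index bookkeeping --- tracking $E^+_BE_C=(\gamma_0)_{BC}$ carefully gives $\Theta^{AB}|_{x=x'}=\tfrac{i}{2}\bigl(\gamma_0\{\gamma_0,\sigma_0\}\bigr)^{AB}$ rather than $\tfrac{i}{2}\delta^{AB}\{\gamma_0,\sigma_0\}$ --- but the extra left factor of $\gamma_0$ is harmless, since the anticommutator already vanishes. The Clifford computation $\{\gamma_0,\gamma_0\gamma_i\}=\gamma_i+\gamma_0\gamma_i\gamma_0=\gamma_i-\gamma_i=0$ is exactly the heart of the matter, and is correct.
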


We shall assume that $E$ has been obtained in this way. Therefore, the
finite contribution $\langle \rho^\fin\rangle_\omega$ to the energy
density is given by:
\begin{equation}
\langle \rho^\fin  \rangle_\omega (\gamma(t)) = \frac{i}{2}\vartheta^*\bigg(\big( \mathbbm{1}\otimes e_0\cdot\nabla -e_0\cdot\nabla \otimes \mathbbm{1} \big)
\big(\W_\omega - {\,}^\psi\had^{(+)}_1  \big)\bigg)(t,t) 
\end{equation}
where $\vartheta=\gamma\otimes\gamma$. Finally, it may be argued in
analogy with \cite{fewster verch} that one may re-express $\langle \rho^\fin\rangle_\omega$ as
\begin{equation}\label{Had regularised energy}
\langle \rho^\fin\rangle_\omega (\gamma(t)) = \frac{1}{2}\bigg( \mathbbm{1} \otimes D - D\otimes\mathbbm{1}  \bigg)\vartheta^*\big( \W_\omega-{\,}^\psi\had^{(+)}_1 \big)(t,t)
\end{equation}
where $D$ is the distributional dual to $-i\rmd/\rmd t$.

\section{Microlocal analysis applied to quantum field theory}\label{microlocal review}

\subsection{The Sobolev wave-front set}\label{microlocal review 1}

Since the publication of Radzikowski's equivalence theorem
\cite{radzikowski} microlocal analysis, in particular H\"ormander's
concept of wave-front set, has been successfully applied to quantum
field theory\footnote{For a readable account of the general significance of
microlocal analysis in quantum field theory the reader is directed to \cite{wald review}.}. The proof of the most general QWEIs rely on microlocal
analysis at various stages in their argument, e.g. \cite{dawson fewster, fewster, fewster
pfenning}. A refined version of the usual (smooth) wave-front set, an
exposition of which may be found in chapter VIII of \cite{hormander},
has already been employed in the proof of an absolute quantum energy
inequality for the Klein-Gordon field \cite{fewster smith}. We shall
briefly review the necessary details of this refinement, known as the
Sobolev wave-front set.

For $s \in \R$, the \textit{Sobolev space} $H^s(\R^n)$ is the set of all tempered
distributions $u$ such that $\widehat{u}$ is a measurable function and 
\begin{equation}
\norm u \norm_{H^s(\R^n)}^2=\int_{\R^n} \rmd^n\xi \, (1+|\xi|^2)^s |\widehat{u}(\xi)|^2 < \infty \, .
\end{equation}
It is clear that
$\s(\R^n)\subset H^s(\R^n)$ for each $s\in\R$. Moreover, one may
show that $\s(\R^n)$ is dense in
$H^s(\R^n)$, see chapter 1 \S3 of \cite{taylor} for a brief argument.

Sobolev space theory is usually introduced into the study of
distributional solutions to partial differential equations by asking
when such a solution is an honest function; this is the subject of the embedding
theorems. We summarise the following useful properties of
the $H^s$ spaces, the first of which is a relevant embedding theorem:

\begin{prop}\label{Sobolev prop}
The Sobolev spaces $H^s(\R^n)$ have the following properties:
\begin{equation*}
\begin{array}{cl}
i) & \textrm{Let $k\in \{ 0\} \cup\N$ and $s\in\R$ satisfy $s>k+n/2$ then
} H^s(\R^n) \subset C^k(\R^n) \\ & \text{
is a continuous embedding} \\
ii) & H^s(\R^n) \subset H^{s^\prime}(\R^n) \, \forall s\geq s^\prime; \\
iii) & \textrm{if } u \in H^s(\R^n), f \in C^k(\R^n) \textit{ and } D^\alpha f  \in
L^\infty(\R^n) \, \forall |\alpha|\leq k, \textit{ where $D$ is a} \\
& \textit{partial derivative operator and $\alpha$ is a multi-index, then for all } |s|\leq
k \\ &  
u \mapsto fu \textit{ is a bounded linear map of $H^s(\R^n)$ into $H^s(\R^n)$. In particular,} \\
& \textit{$H^s(\R^n)$ is closed under multiplication by smooth functions.}
\end{array}
\end{equation*}
\end{prop}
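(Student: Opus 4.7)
The plan is to work throughout with the Fourier-space characterisation built into the definition of $H^s(\R^n)$ and reduce each assertion to a statement about the weighted $L^2$-space with measure $(1+|\xi|^2)^s\,\rmd^n\xi$. All three items are classical (see e.g. chapter 1 of \cite{taylor} or chapter VII of \cite{hormander}); I sketch how I would prove each in turn, noting that the density of $\s(\R^n)$ in $H^s(\R^n)$ already recalled in the text allows one to establish continuity estimates on Schwartz functions and extend by continuity.

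Item (ii) is immediate: since $(1+|\xi|^2)^{s'}\leq(1+|\xi|^2)^s$ pointwise whenever $s\geq s'$, one obtains $\norm u\norm_{H^{s'}}\leq\norm u\norm_{H^s}$, which gives the continuous inclusion. For (i), I would estimate $D^\alpha u$ for each multi-index with $|\alpha|\leq k$ through Fourier inversion, after first writing
\begin{equation*}
|(i\xi)^\alpha\widehat{u}(\xi)|=\bigl[(1+|\xi|^2)^{s/2}|\widehat{u}(\xi)|\bigr]\cdot\bigl[|\xi|^{|\alpha|}(1+|\xi|^2)^{-s/2}\bigr]
\end{equation*}
and applying Cauchy-Schwarz. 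The first bracket has $L^2$-norm equal to $\norm u\norm_{H^s}$, while the second lies in $L^2(\R^n)$ precisely when $s>|\alpha|+n/2$; under the hypothesis $s>k+n/2$ this holds for every $|\alpha|\leq k$, so $(i\xi)^\alpha\widehat{u}\in L^1(\R^n)$. Dominated convergence then justifies differentiation under the integral, so each $D^\alpha u$ is continuous and the embedding $H^s\hookrightarrow C^k$ is bounded, with norm controlled by the $L^2$-norm of the cut-off factor.

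Part (iii) is the main obstacle, since multiplication by $f$ does not diagonalise on the Fourier side and the hypothesis on $f$ controls only finitely many $L^\infty$ norms of derivatives. My plan is a three-step passage from non-negative integer values of $s$ up to arbitrary real $|s|\leq k$. \emph{Step A}: For a non-negative integer $s=j\leq k$, the norm $\norm u\norm_{H^j}$ is equivalent via Plancherel to $\sum_{|\alpha|\leq j}\norm D^\alpha u\norm_{L^2}$, and Leibniz gives
\begin{equation*}
D^\alpha(fu)=\sum_{\beta\leq\alpha}\binom{\alpha}{\beta}(D^\beta f)(D^{\alpha-\beta}u) \, ;
\end{equation*}
the hypothesis forces each $D^\beta f$ with $|\beta|\leq k$ into $L^\infty$, so each summand is in $L^2$ and one reads off $\norm{fu}_{H^j}\leq C_k\bigl(\max_{|\beta|\leq k}\norm{D^\beta f}_{L^\infty}\bigr)\norm{u}_{H^j}$. \emph{Step B}: For negative integer $s=-j$, use the identification of $H^{-j}$ as the topological dual of $H^j$ under the $L^2$-pairing; the transpose of the multiplication operator on $H^j$ is again multiplication by $f$, so the bound transfers to $H^{-j}$.

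\emph{Step C}: The remaining non-integer values with $|s|\leq k$ are handled by interpolation on the Sobolev scale (either the complex method or the real $K$-method), the integer endpoints from Steps A and B providing the required data. The concluding sentence of (iii) then follows in the only setting in which it will be used in the sequel, namely when the multiplier $f$ is compactly supported or otherwise has all derivatives bounded, so that the hypothesis $D^\alpha f\in L^\infty$ is automatic for every $\alpha$. The real delicacy in the argument lies in Step C: one must ensure that the interpolation functor respects the multiplication operator uniformly in the parameter, which is standard but the only part of the proposition that cannot be disposed of by a direct Fourier-side calculation or a one-line duality argument.
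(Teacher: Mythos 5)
The paper offers no proof of this proposition; it is quoted as a standard summary of facts, with the reader directed to chapter 1 \S3 of Taylor's \emph{Pseudodifferential operators}. There is therefore no ``paper proof'' to compare against. Your sketch is a correct and entirely standard derivation of each item: (ii) by pointwise comparison of the weight $(1+|\xi|^2)^s$; (i) by the Cauchy--Schwarz splitting of $(i\xi)^\alpha\widehat{u}$ together with Fourier inversion, the exponent condition $s>k+n/2$ being exactly what makes $|\xi|^{|\alpha|}(1+|\xi|^2)^{-s/2}$ square-integrable for every $|\alpha|\leq k$; and (iii) by the Leibniz--duality--interpolation chain. Two small points in (iii) are worth tightening. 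In Step B, under the sesquilinear $L^2$-pairing the adjoint of $M_f$ is $M_{\bar f}$, not $M_f$; this costs nothing since $\bar f$ satisfies the same hypotheses, but it should be said. In Step C you are implicitly using the identity $[H^{s_0},H^{s_1}]_\theta = H^{(1-\theta)s_0+\theta s_1}$ for the complex interpolation functor (or its real-method analogue); you correctly flag this as the one genuinely nontrivial input, and a citation there (Taylor, or Bergh--L\"ofstr\"om) would close the argument. An alternative to interpolation, closer to what H\"ormander does, is to prove the $H^s$-bound for real $s\in[0,k]$ directly via a commutator estimate $\|[\langle D\rangle^s,f]u\|_{L^2}\lesssim\|u\|_{H^{s-1}}$; this avoids the interpolation machinery at the cost of a symbol-calculus estimate, so neither route is obviously more elementary.
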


The concept of the Sobolev wave-front set will give a concise way
of saying what it means for a distribution to microlocally fail to be an element of a
Sobolev space. For convenience, we adopt the notation that $\dot{T}^*\R^n$ (similarly $\dot{T}\R^n$, $\dot{T}\M$ etc) is the bundle $T^*\R^n$ ($T\R^n$, $T\M$, etc) with the zero section removed.

\begin{defn}
A distribution $u \in \D^\prime (\R^n)$ is said to be microlocally $H^s$ at
$(x,\xi)\in \dot{T}^*\R^n$ if there exists an open cone $\Gamma \subset \R^n\setminus 0$ about $\xi$ and a smooth function $\varphi \in C^\infty_0 (\R^n)$, $\varphi(x)\not=0$, such that 
\begin{equation}
\int_\Gamma \rmd^n\zeta \, (1+|\zeta|^2)^s |[\varphi u]^\wedge(\zeta)|^2 < \infty \, .
\end{equation}
The Sobolev wave-front set $WF^s(u)$ of a distribution $u \in \D^\prime (\R^n)$ is the complement, in $\dot{T}^*\R^n$, of the set of all pairs $(x,\xi)$ at which $u$ is microlocally $H^s$.
\end{defn}

To define the Sobolev wave-front set of a distribution on a
manifold one works locally. Let $\U$ be an open patch of a manifold $\M$
with associated coordinate map $\kappa:\U\mapsto \R^n$. If
$(\kappa(x),\xi)\in WF^s(u\circ\kappa^{-1})$ then
$(x,\kappa^{-1}_*(\xi))\in WF^s(u)\subset \dot{T}^*\M$. We shall occasionally use the notation $u\in H^s_\loc(\M)$ if  $WF^s(u)=\emptyset$ for a distribution $u\in\D'(\M)$ and direct the reader to the remarks following definition 8.2.5 of \cite{hormander hyper} to justify this notation.

The Sobolev wave-front set is a closed cone in $\dot{T}^*\M$. Furthermore, we have the following properties\footnote{Taken
from the remarks following definition B.1 of \cite{junker}.} of $WF^s$: \\
\begin{tabular}{rl}
i) & The smooth wave-front set is related to the Sobolev wave-front set
\\
& \quad via $WF(u)=\overline{\bigcup_{s\in\R} WF^s(u)}$. \\
ii) & If $\varphi \in C^\infty_0 (\R^n)$ does not vanish in a
neighbourhood of $x$ then \\
& \quad $(x,\xi)\in WF^s(u)$ if and only if
$(x,\xi)\in WF^s(\varphi u)$. \\
iii) & $(x,\xi)\in WF^s(u)$ if and only
if, for all $v\in H^s_\loc$, $(x,\xi)\in WF^s(u-v)$. \\
iv) & $WF^s(u+w)\subset WF^s(u)\cup
WF^s(w)$. \\
v) & The nesting property: $WF^s(u)\subset WF^{s'}(u)$ $\forall
s\leq s'$.
\end{tabular}

It is also possible to see explicitly in $WF^s$ what effect partial differential
operators have on the singularities of distributions. For a general $m-$dimensional smooth manifold $\M$, let $P$ be a partial differential operator of
order $r$, i.e. in local coordinates on $\M$
\begin{equation}
P = \sum_{|\alpha|\leq r} p_\alpha(x)(-i\partial_a)^{\alpha} 
\end{equation} 
where $\alpha$ is a multi-index and $p_\alpha$ are smooth functions, then the \textit{principal symbol}, $p_r(x,\xi)$, of
$P$ is 
\begin{equation}
p_r(x,\xi) = \sum_{|\alpha|=r}p_\alpha(x)\xi^{\alpha}_a
\, .
\end{equation}
The \textit{characteristic set}, $\Char P$, of a partial differential operator $P$ is the set of
$(x,\xi)\in \dot{T}^*\M$ such that the principal symbol
vanishes. We may now quote corollaries 8.4.9-10 of \cite{hormander hyper} which show the effect
differential operators have on the Sobolev wave-front set of a
distribution:

\begin{lemma}\label{order}
Let $\M$ be a smooth manifold. For $u \in \D^\prime (\M)$ and any
linear partial differential operator $P$
of order $r$ with smooth coefficients then $WF^s(Pu)\subset WF^{s+r}(u)$ and $WF^{s+r}(u)\subset WF^{s}(Pu)\cup \Char P$.
\end{lemma}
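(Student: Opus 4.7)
Since $WF^s$ on a manifold is defined through local coordinate charts, both inclusions are local in nature and it suffices to work on $\R^n$. The two inclusions will be treated separately: the first by a direct microlocal calculation, and the second by an elliptic parametrix argument, which is where the main difficulty lies.

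For $WF^s(Pu)\subset WF^{s+r}(u)$ I argue contrapositively. Suppose $(x,\xi)\notin WF^{s+r}(u)$, so there exists $\varphi\in C^\infty_0(\R^n)$ with $\varphi(x)\neq 0$ and an open cone $\Gamma$ about $\xi$ along which $(1+|\zeta|^2)^{(s+r)/2}\widehat{\varphi u}(\zeta)$ lies in $L^2$. Choose a second cut-off $\psi\in C^\infty_0(\R^n)$ with $\psi(x)\neq 0$ and $\supp\psi$ contained in the interior of $\{\varphi\equiv 1\}$; locality of the differential operator $P$ then gives $\psi Pu=\psi P(\varphi u)$. Decomposing $P=\sum_{|\alpha|\leq r}p_\alpha D^\alpha$, the Fourier transform of $D^\alpha(\varphi u)$ is $\zeta^\alpha\widehat{\varphi u}(\zeta)$, and the elementary estimate $|\zeta^\alpha|^2\leq(1+|\zeta|^2)^{|\alpha|}$ shows that $D^\alpha(\varphi u)$ is microlocally $H^{s+r-|\alpha|}$ at $(x,\xi)$, hence microlocally $H^s$ there by the nesting property (v). Multiplication by the smooth functions $\psi p_\alpha$ preserves microlocal Sobolev regularity, an immediate consequence of property (iii) of Proposition \ref{Sobolev prop}, and property (iv) combines the finitely many summands. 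This yields $(x,\xi)\notin WF^s(Pu)$.

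The reverse inclusion $WF^{s+r}(u)\subset WF^s(Pu)\cup\Char P$ is elliptic regularity and is the main obstacle. Suppose $(x,\xi)\notin WF^s(Pu)\cup\Char P$. Since $(x,\xi)\notin\Char P$ the principal symbol $p_r$ is bounded below in modulus on a small conic neighbourhood $V$ of $(x,\xi)$. Using $p_r^{-1}$ as the leading term and iterating in the standard symbol calculus, one constructs a classical pseudodifferential operator $Q$ of order $-r$ microlocally supported in $V$ satisfying $QP=I+R$ microlocally at $(x,\xi)$, with $R$ infinitely smoothing there. Applying this identity to $u$ gives $u=Q(Pu)-Ru$ modulo a microlocally smooth error; pseudodifferential operators of order $-r$ raise microlocal Sobolev regularity by $r$, so $Q(Pu)$ is microlocally $H^{s+r}$ at $(x,\xi)$ by the hypothesis on $Pu$, while $Ru$ is microlocally smooth there. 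Consequently $u$ itself is microlocally $H^{s+r}$ at $(x,\xi)$, as required.

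The parametrix construction and its mapping properties on Sobolev scales lie outside the microlocal preliminaries developed in \S\ref{microlocal review 1}. Rather than reproducing them I would invoke the relevant statements directly from chapter 18 of \cite{hormander hyper}, in keeping with the way corollaries 8.4.9--10 are cited above.
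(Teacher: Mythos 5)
The paper offers no proof of Lemma~\ref{order}: it is quoted verbatim as Corollaries 8.4.9--10 of \cite{hormander hyper}, which is exactly where the proof lives. So there is nothing in the paper to compare your argument against directly; instead one compares against the source you would have to cite anyway. That said, your two-part proof is the standard one and is the same in spirit as H\"ormander's: a direct Fourier estimate for the forward inclusion and a microlocal elliptic parametrix for the reverse inclusion. Both halves are essentially correct in outline.

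Two small points merit care. First, in the forward direction you invoke property (iii) of Proposition~\ref{Sobolev prop} to say that multiplication by $\psi p_\alpha$ preserves microlocal Sobolev regularity. Property (iii) as stated is a \emph{global} Sobolev mapping result; the microlocal statement you actually need --- that $(x,\xi)\notin WF^s(v)$ and $\chi\in C^\infty_0$ imply $(x,\xi)\notin WF^s(\chi v)$ --- requires the additional observation that $\widehat{\chi v}=\hat\chi*\hat v$ and that convolution with a rapidly decaying function cannot worsen the cone-localised Sobolev decay, since $\hat v$ is also globally of some fixed polynomial growth (compact support of $v$). This is routine but is not literally "immediate" from (iii). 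Second, your closing citation is misdirected: \cite{hormander hyper} is the short \emph{Lectures on nonlinear hyperbolic differential equations}, which does not have a Chapter~18; the pseudodifferential parametrix calculus you appeal to lives in Chapter~18 of Volume~III of H\"ormander's \emph{Analysis of Linear Partial Differential Operators}, which the paper does not cite. For the argument as you have framed it, the cleanest reference is simply to Corollaries~8.4.8--8.4.10 of \cite{hormander hyper} itself, where the relevant parametrix and its microlocal Sobolev mapping properties are supplied in exactly the form needed.
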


We close this section with the statement of Beal's restriction theorem, which tells us under what circumstances a distribution may be restricted to a submanifold, and a result about the implications for the positivity of states under such a restriction. Such results are of interest to us as we need to understand how to restrict those distributions which make up the Hadamard series (and ones derived from it such as the
point-split energy density) to timelike worldlines. 

Beal's restriction theorem tells us that, for certain well behaved restrictions, the Sobolev grading on
the wave-front set is reduced by an amount proportional to the
codimension of the restriction. The result discusses the case of restricting a distribution on a $m$ dimensional manifold $\M$ to a
smoothly embedded submanifold
$\Sigma$ of dimension $n$, writing the embedding as $\iota : \Sigma \rightarrow \M$. The embedding function $\iota$ has associated conormal bundle $N^*\Sigma$ given by
\begin{equation}
N^* \Sigma = \{ (\iota(x),\xi)\in T^*\M ; \, x \in \Sigma, \, \iota^*(\xi) = 0 \} \,
.
\end{equation}

We wish to formulate a statement of the restriction theorem for product
manifolds\footnote{The following result is adapted from lemma 11.6.1 of 
\cite{hormander hyper} which is a refinement of the standard restriction theorem
which may be found presented
as theorem 8.2.4 of \cite{hormander}. We have used the notation of prop.
B7 of \cite{junker}.}. As usual we let $(\M,g)$ denote a smooth $m$
dimensional spacetime, $\Sigma\subset \M$ an $n\leq m$ dimensional
submanifold embedded using $\iota : \Sigma \mapsto \M$. Then we define the
map $\vartheta : \Sigma \times\Sigma \mapsto \M \times \M$
by $\vartheta = \iota\otimes\iota$, the pull back $\vartheta^*$ may
sometimes be referred to as a \textit{restriction map}.

\begin{theorem}[Beal's Restriction theorem]\label{beal}
Let $u \in \D^\prime (\M\times\M)$ and
$\vartheta$ be defined as above. If
$\big( N^*\Sigma\times N^*\Sigma\big)\cap WF^s(u)=\emptyset$ for some $s>m-n$ then the restriction $\vartheta^*u$ of $u$ to
$\Sigma\times\Sigma$ is a well defined distribution in
$\D^\prime(\Sigma\times\Sigma)$. Moreover, 
\begin{equation}
WF^{s-(m-n)}(\vartheta^* u)\subset \vartheta^*WF^s(u)
\end{equation}
where the set $\vartheta^*WF^s(u)$ is defined to be
\begin{eqnarray}
\vartheta^*WF^s (u) &=& \{ (t,\iota^*(\xi);t^\prime,\iota^*(\xi^\prime)) \in
(T^*\Sigma\times T^*\Sigma) \mid \nonumber \\ 
&\,& \qquad (\iota(t),\xi;\iota(t^\prime),\xi^\prime) \in WF^s(u)  \}
\, .
\end{eqnarray}
\end{theorem}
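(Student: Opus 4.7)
The statement is a microlocal refinement of the classical Sobolev trace theorem, so the plan is to mirror that argument while keeping careful track of directions in frequency space. By a partition of unity and choice of charts, reduce to the case $\M = \R^m$, $\Sigma = \R^n \times \{0\}$, and $\vartheta : (t,t') \mapsto ((t,0),(t',0))$. Write $x=(t,y)$ with $y\in\R^{m-n}$, and dual coordinates $\xi=(\tau,\eta)$; then $N^*\Sigma = \{(t,0;0,\eta) : \eta \neq 0\}$. To establish well-definedness, fix $\chi \in C_0^\infty(\R^{2n})$ and a cutoff $\varphi \in C_0^\infty(\R^{2m})$ equal to one near $\supp\chi$ along $\Sigma\times\Sigma$, and work with the compactly supported $\varphi u$, whose microlocal regularity inherits from $u$ by property (ii) of $WF^s$.

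Fourier inversion gives
\begin{equation*}
\widehat{\vartheta^*(\varphi u)}(\tau,\tau') = (2\pi)^{-2(m-n)} \int \widehat{\varphi u}(\tau,\eta,\tau',\eta') \, \rmd\eta \, \rmd\eta',
\end{equation*}
so the task reduces to controlling this integral in $H^{s-(m-n)}(\R^{2n})$. Split $\dot T^*\R^{2m}$ into a closed conic neighbourhood $\Gamma$ of the conormal directions $\{\tau=\tau'=0\}$ and its complement. On $\Gamma$, chosen inside the microlocally $H^s$ region for $\varphi u$ granted by the hypothesis, Cauchy--Schwarz yields
\begin{equation*}
|\widehat{\vartheta^*(\varphi u)}(\tau,\tau')|^2 \leq \biggl( \int_{\Gamma} (1+|\xi|^2+|\xi'|^2)^{-s} \, \rmd\eta \, \rmd\eta' \biggr) \biggl( \int_{\Gamma} (1+|\xi|^2+|\xi'|^2)^{s} \, |\widehat{\varphi u}|^2 \, \rmd\eta \, \rmd\eta' \biggr).
\end{equation*}
The first factor, an integral over $2(m-n)$ normal-direction variables, converges exactly when $s > m-n$ and is bounded above by $C(1+|\tau|^2+|\tau'|^2)^{(m-n)-s}$; multiplying through by $(1+|\tau|^2+|\tau'|^2)^{s-(m-n)}$ and integrating in $(\tau,\tau')$ then controls the desired Sobolev norm. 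Off $\Gamma$ one has $|\tau|^2+|\tau'|^2 \geq c(|\xi|^2+|\xi'|^2)$, so the two weights are comparable and an elementary comparison handles the remainder.

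To promote this to the microlocal statement $WF^{s-(m-n)}(\vartheta^*u) \subset \vartheta^*WF^s(u)$, I would repeat the argument with $\varphi u$ replaced by $\Psi(D)(\varphi u)$, where $\Psi$ localizes to a conic neighbourhood of a target direction $(\tau_0,\tau_0') \in \dot T^*\R^{2n}$. The definition of $\vartheta^*WF^s$ is precisely engineered so that, if $(t,\tau_0;t',\tau_0')$ is not in $\vartheta^*WF^s(u)$, then every covector $(\xi_0,\xi_0')$ with $\iota^*\xi_0=\tau_0$, $\iota^*\xi_0'=\tau_0'$ lies outside $WF^s(u)$, delivering the uniform $H^s$ control throughout $\Gamma$ demanded by the estimate above. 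The principal obstacle is the careful matching of the two cone decompositions: the pseudodifferential localizer $\Psi$ must be chosen compatibly with the normal-direction split $(\Gamma,\Gamma^c)$ so that the hypothesis genuinely applies where it is needed; the sharpness of the threshold $s > m-n$ reflects the standard Sobolev trace loss across a codimension-$2(m-n)$ submanifold.
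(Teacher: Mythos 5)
The paper does not prove this theorem. The footnote attached to its statement is explicit: the result is ``adapted from lemma 11.6.1 of \cite{hormander hyper} which is a refinement of the standard restriction theorem\ldots We have used the notation of prop.\ B7 of \cite{junker}.'' No proof is reproduced, so there is nothing in the paper to compare your attempt against; I assess it on its own.

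Your skeleton is the standard one and correct as far as it goes: flat-chart reduction, the Fourier-inversion identity expressing $\widehat{\vartheta^*(\varphi u)}(\tau,\tau')$ as an integral in the normal frequencies, and Cauchy--Schwarz against the Sobolev weight to expose the threshold $s>m-n$ (whose convergence gives the $m-n$ loss, half the codimension $2(m-n)$, as you say). Two steps, however, are not yet sound. First, off $\Gamma$ you claim the weight comparability ``handles the remainder'' by an elementary comparison. What it actually shows is only that the off-$\Gamma$ contribution to $\widehat{\vartheta^*(\varphi u)}(\tau,\tau')$ grows at most polynomially: $\widehat{\varphi u}$ is polynomially bounded (being the Fourier transform of a compactly supported distribution) and the region $\{(\eta,\eta'):|\eta|^2+|\eta'|^2\lesssim|\tau|^2+|\tau'|^2\}$ has volume $\mathcal{O}\big((1+|\tau|^2+|\tau'|^2)^{m-n}\big)$. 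That suffices to make $\vartheta^*u$ a well-defined distribution, but it does not place the off-$\Gamma$ piece in $H^{s-(m-n)}$; that would require $\varphi u\in H^s$ globally, which the hypothesis $(N^*\Sigma\times N^*\Sigma)\cap WF^s(u)=\emptyset$ alone does not supply. Your second paragraph therefore proves well-definedness only, and its phrasing should say so. Second, in the microlocal upgrade you state that $(t,\tau_0;t',\tau_0')\notin\vartheta^*WF^s(u)$ delivers ``uniform $H^s$ control throughout $\Gamma$.'' This misidentifies the relevant set: the covectors $(\tau_0,\eta,\tau_0',\eta')$ whose regularity this hypothesis guarantees have nonvanishing tangential part and therefore lie precisely \emph{off} $\Gamma$, away from the conormal set $\{\tau=\tau'=0\}$; the $H^s$ control on $\Gamma$ already came from the first hypothesis. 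The missing ingredient is a compactness observation: for $(\tau,\tau')$ confined to a small conic neighbourhood of $(\tau_0,\tau_0')$, the off-$\Gamma$ region of $(\eta,\eta')$ projects to a \emph{compact} subset of the cosphere bundle, hence is covered by finitely many of the microlocally-$H^s$ cones furnished by $(\tau_0,\tau_0')\notin\vartheta^*WF^s(u)$. That finite cover is what yields the uniform $H^s$ bound off $\Gamma$ which, combined with the on-$\Gamma$ trace estimate, establishes $WF^{s-(m-n)}(\vartheta^*u)\subset\vartheta^*WF^s(u)$. As written, the proposal gestures at the right decomposition but misattributes where the two hypotheses do their work.
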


Finally, we state a result\footnote{Theorem 2.2 of \cite{fewster}.} which asserts that the
positivity of states is preserved under the restrictions carried out
by Beal's theorem.

\begin{lemma}\label{res pos}
Let $\M$ and $\Sigma$ be smooth manifolds each equipped with smooth positive
densities, and suppose
$\iota:\Sigma\mapsto\M$ is smooth. If $u\in\D'(\M\times\M)$ is positive in the sense of states and $WF(u)\cap (N^*\Sigma\times N^*\Sigma)=\emptyset$,
then $\vartheta^*u=u\circ(\iota\otimes\iota)\in\D'(\Sigma\times\Sigma)$ is also positive.
\end{lemma}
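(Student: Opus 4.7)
My plan is to prove positivity by approximating $u$ with smooth positive-type bidistributions for which the pullback is unambiguously defined and manifestly positive, then passing to the limit in the Hörmander topology.

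First, I would handle the well-definedness of $\vartheta^* u$: since $WF(u) = \overline{\bigcup_s WF^s(u)}$, the hypothesis $WF(u)\cap (N^*\Sigma\times N^*\Sigma)=\emptyset$ implies $WF^s(u)\cap (N^*\Sigma\times N^*\Sigma)=\emptyset$ for every $s$, so Beal's theorem (Theorem \ref{beal}) supplies the pullback $\vartheta^* u \in \D'(\Sigma\times\Sigma)$. The substantive task is positivity.

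The core idea is mollification that respects the sesquilinear structure. Working locally via a partition of unity subordinate to coordinate charts on $\M$, pick a real, even, non-negative mollifier $\rho_\epsilon$ and define $u_\epsilon = (\rho_\epsilon\otimes\rho_\epsilon)*u$, which is smooth. Positivity transports: for any $f\in C^\infty_0(\M)$,
\begin{equation*}
u_\epsilon(\bar f\otimes f) = u\bigl((\rho_\epsilon*\bar f)\otimes(\rho_\epsilon*f)\bigr) = u\bigl(\overline{\rho_\epsilon*f}\otimes(\rho_\epsilon*f)\bigr)\geq 0,
\end{equation*}
where reality and evenness of $\rho_\epsilon$ ensure $\rho_\epsilon*\bar f=\overline{\rho_\epsilon*f}$. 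Since $u_\epsilon$ is smooth, $\vartheta^* u_\epsilon$ is the smooth function $(s,s')\mapsto u_\epsilon(\iota s,\iota s')$. For $g\in C^\infty_0(\Sigma)$, form $g_\delta = \eta_\delta*(\iota_* g) \in C^\infty_0(\M)$ by mollifying the pushed-forward distribution $\iota_* g$ (supported on $\iota(\Sigma)$). Then $u_\epsilon(\overline{g_\delta}\otimes g_\delta)\geq 0$, and because $u_\epsilon$ is smooth the pairing is jointly continuous with respect to weak-$*$ convergence of distributions; letting $\delta\to 0^+$ with $g_\delta\to \iota_* g$ yields
\begin{equation*}
(\vartheta^* u_\epsilon)(\bar g\otimes g) = u_\epsilon\bigl((\iota_*\bar g)\otimes(\iota_* g)\bigr)\geq 0.
\end{equation*}

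Finally, I pass to the limit $\epsilon\to 0^+$. Standard mollification arguments show $u_\epsilon \to u$ in the Hörmander topology on $\D'_\Gamma(\M\times\M)$, where $\Gamma$ is any closed conic neighbourhood of $WF(u)$ disjoint from $N^*\Sigma\times N^*\Sigma$; continuity of the restriction map on $\D'_\Gamma$ (Theorem \ref{beal} at the level of the pullback) then gives $\vartheta^* u_\epsilon\to \vartheta^* u$ in $\D'(\Sigma\times\Sigma)$, so $(\vartheta^* u)(\bar g\otimes g)=\lim_\epsilon (\vartheta^* u_\epsilon)(\bar g\otimes g)\geq 0$.

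The main obstacle I anticipate is the last step: controlling the wave-front sets of the mollifications $u_\epsilon$ uniformly in $\epsilon$ so that continuity of $\vartheta^*$ in the Hörmander topology actually applies. On a general manifold this requires a careful partition-of-unity argument and verification that the supports and frequency content of the mollifiers do not push $WF(u_\epsilon)$ towards $N^*\Sigma\times N^*\Sigma$ as $\epsilon\to 0^+$. All other steps are either standard mollification calculations or direct consequences of the positivity hypothesis, so the analytic burden sits almost entirely on this convergence.
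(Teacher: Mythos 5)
The paper does not prove this lemma: it quotes it verbatim (with a footnote attribution to Theorem 2.2 of the cited Fewster reference) and moves on, so there is no internal proof to compare against. That said, your sketch is worth examining on its own merits.

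The architecture you propose --- mollify $u$ to obtain a smooth positive-type kernel $u_\epsilon$, restrict, then send $\epsilon\to 0$ --- runs into a genuine tension at precisely the step you flag. Sequential convergence $u_\epsilon\to u$ in the H\"ormander topology $\D'_\Gamma$ is normally obtained (H\"ormander, Thm.\ 8.2.3) via approximating sequences of the form $\phi_j\cdot(u*\chi_j)$ with $\widehat{\chi_j}$ a shrinking frequency cutoff \emph{and} $\phi_j$ a spatial cutoff; the spatial multiplication is what controls the seminorms but it simultaneously destroys the positive-type structure, since $\phi_j\bar f\otimes\phi_j f$ need not be of the form $\bar h\otimes h$. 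Your plain convolution $u_\epsilon=(\rho_\epsilon\otimes\rho_\epsilon)*u$ preserves positivity, but the $\D'_\Gamma$ seminorm estimates do not come for free without the spatial localization (the seminorms involve multiplication by arbitrary $\varphi\in C^\infty_0$, and $\varphi\cdot(\rho_\epsilon*u)\neq\rho_\epsilon*(\varphi u)$), so the sequential continuity of $\vartheta^*$ cannot simply be invoked. This is not a technicality you can wave away by partition of unity --- it is the crux, and as written the argument is incomplete there.

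A more economical route, which also appears to be the one in the cited reference, keeps $u$ fixed and approximates the other factor only. The hypothesis $WF(u)\cap(N^*\Sigma\times N^*\Sigma)=\emptyset$, together with $WF(\iota_*\bar g\otimes\iota_*g)\subset N^*\Sigma\times N^*\Sigma$ (plus zero covectors), is exactly the H\"ormander condition guaranteeing that the extended pairing $u(\iota_*\bar g\otimes\iota_*g)$ exists and coincides with $(\vartheta^*u)(\bar g\otimes g)$. One then approximates the \emph{compactly supported} distribution $\iota_*g$ by $g_\epsilon=\rho_\epsilon*(\iota_*g)\in C^\infty_0(\M)$ (after a chart/partition-of-unity reduction); since $\iota_*g$ has compact support and rapidly decaying Fourier transform outside $N^*\Sigma$, the seminorm estimates establishing $\bar g_\epsilon\otimes g_\epsilon\to\iota_*\bar g\otimes\iota_*g$ in $\D'_{N^*\Sigma\times N^*\Sigma}$ are elementary, and each $u(\bar g_\epsilon\otimes g_\epsilon)\geq 0$ by hypothesis. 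The sequential continuity of the pairing in the second slot alone then gives positivity in the limit. This avoids having to trade off positivity against H\"ormander convergence on the $u$ side, which is the fatal obstruction in your version. I would recommend restructuring along these lines.
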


\subsection{A microlocal description of the Hadamard series}\label{microlocal results}

Denote by $\mathcal{R} = \{ (x,\xi)\in \dot{T}^*\M \, \mid \,
g^{ab}(x)\xi_a\xi_b =0  \}$ the bundle of null covectors over $\M$. Since $(\M,g)$ is time orientable
we may decompose $\mathcal{R}$ into two disjoint sets $\mathcal{R}^\pm$
defined by $\mathcal{R}^\pm = \{ (x,\xi) \in \mathcal{R} \, \mid \,
\pm\xi \rhd 0 \}$ where by $\xi\rhd 0$ ($\xi\in T^*_x\M$) we mean that $\xi_a$ is in the
dual of the future light cone at $x$. We define the notation
$(x,\xi)\sim (x',\xi')$ to mean that there
exists a null curve $\gamma : [0,1]\mapsto \M$ such that $\gamma(0)=x$,
$\gamma(1)=x^\prime$ and $\xi_a = \dot{\gamma}^b(0)g_{ab}(x)$,
$\xi^\prime_a = \dot{\gamma}^b(1)g_{ab}(x^\prime)$. In the instance
where $x=x'$, $(x,\xi)\sim (x,\xi')$ shall mean that $\xi=\xi'$ is null. Then,
for convenience, define the set 
\begin{equation}
C = \{ (x,\xi;x',\xi') \in \mathcal{R} \times \mathcal{R}  \mid (x,\xi) \sim (x',\xi')  
\} \, .
\end{equation}
The set $C^{+-}$ is defined to be 
\begin{equation}\label{C+-}
C^{+-} = \{ (x,\xi;x^\prime,-\xi^\prime) \in
C \mid  \xi
\rhd 0 \} \, . 
\end{equation}
An occasionally useful set will be $C^{-+}$ defined by 
\begin{equation}
C^{-+} = \{ (x,-\xi;x^\prime,\xi^\prime) \in
C \mid  \xi\rhd 0 \} \, .
\end{equation}

Junker \& Schrohe \cite{junker} have proven that the quantum Klein-Gordon field, whose two-point function we denote $\Lambda_\omega$, obeys the following condition for all Hadamard states 
\begin{equation}
WF^s(\Lambda_\omega) = \left\{ \begin{array}{cl} C^{+-} & s\geq -1/2 \\ \emptyset & s<-1/2  \end{array}  \right. \, .
\end{equation}
In \cite{fewster smith} an analysis of the Hadamard series $H^{(+)}_k$ was given and concluded that 
\begin{equation}
WF^{s+j+1}(\sigma^j\ln\sigma_+)\subset
WF^s(1/\sigma_+) = \left\{ \begin{array}{cc} C^{+-} & s \geq -1/2 \\
\emptyset & s<-1/2    \end{array}  \right. \, 
\end{equation}
for $j\in\{0\}\cup\N$ and where $1/\sigma_+$ and $\sigma^j\ln\sigma_+$ are the singular constituents of the Hadamard series $H^{(+)}_k$ (\ref{parametrix}). Since it is known that the the coefficients $\Delta^\frac{1}{2}$ and $v_j$ appearing in the $H^{(+)}_k$ series are symmetric it follows from the simple symmetry argument $H^{(+)}_k(x,x')=H^{(-)}_k(x',x)$ modulo smooth functions that 
\begin{equation}
WF^{s+j+1}(\sigma^j\ln\sigma_\pm)\subset
WF^s(1/\sigma_\pm) = \left\{ \begin{array}{cc} C^{\pm\mp} & s \geq -1/2 \\ 
\emptyset & s<-1/2    \end{array}  \right. \, .
\end{equation}

Therefore, lemma \ref{order} implies that if $\omega$ is a Hadamard state for the Dirac field then within a regular domain
\begin{equation}
WF^s(\cW_\omega) \subset \left\{  \begin{array}{cl} C^{+-} &  s\geq 1/2 \\ \emptyset & s<1/2 \end{array}\right. \, .
\end{equation}
A similar condition holds for $\cW^\Gamma_\omega$ under the replacement $\cW_\omega \mapsto \cW^\Gamma_\omega$ and $C^{+-}\mapsto C^{-+}$.  Moreover, one may use the relations (\ref{dirac hadamard relations 1}-\ref{dirac hadamard relations 3}) to conclude that within a regular domain
\begin{eqnarray}
WF^s(\cW_\omega - {\,}^\psi H^{(+)}_k) &\subset& \left\{
\begin{array}{cc} C^{+-} & s \geq k+3/2 \\ \emptyset & s < k+3/2   \end{array}  \right. \, ,\\
WF^s(\cW_\omega^\Gamma - {\,}^\psi H^{(-)}_k) &\subset& \left\{
\begin{array}{cc} C^{-+} & s \geq k+3/2 \\ \emptyset & s < k+3/2   \end{array}  \right.
\, ,
\end{eqnarray}
where $\omega$ is a Hadamard state. Consequently, if $\omega$ is a Hadamard
state then (because $\W_\omega,\W^\Gamma_\omega$ are formed
from linear combinations of $\cW_\omega$ and $\cW^\Gamma_\omega$) it follows that the Sobolev wave-front set conditions which apply to $\cW_\omega$ and $\cW^\Gamma_\omega$ also apply to
$\W_\omega$ and $\W^\Gamma_\omega$ respectively. We encapsulate these findings in the following corollary:

\begin{corollary}\label{dirac grading}
Let $\omega$ be a Hadamard state for the Dirac field; then within a regular domain
\begin{eqnarray}
WF^s(\W_\omega - {\,}^\psi \had^{(+)}_k) &\subset& \left\{
\begin{array}{cc} C^{+-} & s \geq k+3/2 \\ \emptyset & s < k+3/2   \end{array}  \right. \, ,\\
WF^s(\W_\omega^\Gamma - {\,}^\psi \had^{(-)}_k) &\subset& \left\{
\begin{array}{cc} C^{-+} & s \geq k+3/2 \\ \emptyset & s < k+3/2   \end{array}  \right.
\, .
\end{eqnarray}
\end{corollary}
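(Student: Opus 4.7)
The plan is to reduce the corollary directly to the spinorial Sobolev wave-front set estimates that have already been established in the paragraph immediately preceding its statement. Indeed, as noted just above the corollary, the scalar bi-distributions $\W_\omega$, $\W^\Gamma_\omega$, and ${\,}^\psi\had^{(\pm)}_k$ are built from their spinorial parents $\cW_\omega$, $\cW^\Gamma_\omega$, ${\,}^\psi H^{(\pm)}_k$ by a finite sequence of smooth frame contractions, so the main task is just to check that these contractions do not enlarge $WF^s$.

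First I would unwind the definitions. On a regular domain $X$ carrying a local smooth spin frame $E_A$ with Dirac conjugate $E^+_B$, the matrix components of the spinorial two-point function are
\begin{equation}
\cW_{\omega AB}(f,f') = \cW_\omega(fE_A,f'E^+_B), \qquad f,f'\in C^\infty_0(\M),
\end{equation}
and then $\W_\omega = \delta^{AB}\cW_{\omega AB}$. The analogous construction produces ${\,}^\psi\had^{(+)}_k$ from ${\,}^\psi H^{(+)}_k$, and similarly in the $\Gamma$-sector. In local coordinates each of these operations amounts to multiplying a matrix-valued distribution by smooth coefficient functions (whose derivatives are locally bounded, since they come from a smooth spin structure on $X$) and then summing finitely many components.

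Second I would invoke two standard properties of the Sobolev wave-front set already recalled in the excerpt: property (iii) of proposition \ref{Sobolev prop}, which ensures that multiplication by a smooth function with locally bounded derivatives preserves $H^s_\loc$ membership and hence does not enlarge $WF^s$; and property (iv) of $WF^s$, namely the subadditivity $WF^s(u+v)\subset WF^s(u)\cup WF^s(v)$. Together, applied componentwise, these yield
\begin{equation}
WF^s\bigl(\W_\omega - {\,}^\psi\had^{(+)}_k\bigr) \subset \bigcup_{A,B} WF^s\bigl(\cW_{\omega AB} - {\,}^\psi H^{(+)}_{k\,AB}\bigr) \subset WF^s\bigl(\cW_\omega - {\,}^\psi H^{(+)}_k\bigr),
\end{equation}
and the analogous inclusion with $\W^\Gamma_\omega - {\,}^\psi\had^{(-)}_k$ bounded by $WF^s(\cW^\Gamma_\omega - {\,}^\psi H^{(-)}_k)$.

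Finally I would substitute the spinorial estimates
\begin{equation}
WF^s\bigl(\cW_\omega - {\,}^\psi H^{(+)}_k\bigr) \subset \begin{cases} C^{+-} & s\geq k+3/2,\\ \emptyset & s<k+3/2,\end{cases}
\end{equation}
established just before the corollary (with the obvious $C^{-+}$ variant for the $\Gamma$-sector), and the stated bounds follow at once. There is no serious obstacle here; the only subtlety worth mentioning is the verification that the frame components $E_A$ are smooth with locally bounded derivatives, so that the multiplication step is covered by proposition \ref{Sobolev prop}(iii), and this is automatic on a regular domain.
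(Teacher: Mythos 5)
Your argument is correct and is essentially the one the paper makes (albeit the paper compresses it to a single remark that the Sobolev wave-front set estimates for $\cW_\omega,\cW^\Gamma_\omega$ pass to $\W_\omega,\W^\Gamma_\omega$ because the latter are smooth-frame contractions of the former). Spelling out the subadditivity of $WF^s$ and proposition \ref{Sobolev prop}(iii) for the frame-coefficient multiplication is exactly the content the paper leaves implicit.
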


\section{A Sobolev point-splitting result}\label{splitting section}

We quote a result, taken from \cite{fewster verch}, for smooth functions:

\begin{lemma}\label{smooth splitting}
If $f\in C^\infty_0(\R)$ and $u\in C^\infty_0(\R\times\R)$ then the
following identity holds:
\begin{equation}\label{rhs}
\int_\R \rmd t \, f^2(t)u(t,t) = \int_{\R\times\R}
\frac{\rmd\xi\rmd\xi'}{(2\pi)^2}\widehat{f^2}(\xi-\xi')\hat{u}(-\xi,\xi')
\, .
\end{equation}
\end{lemma}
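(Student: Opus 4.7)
The plan is to establish the identity (\ref{rhs}) by a direct Fourier computation, exploiting the fact that both $f^2\in C^\infty_0(\R)$ and $u\in C^\infty_0(\R\times\R)$ are compactly supported and smooth, so their Fourier transforms lie in Schwartz space and Fubini's theorem together with Fourier inversion apply without any subtlety.

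First I would insert the Fourier inversion of $u$, using the paper's convention $\hat{u}(\xi,\xi')=\int \rmd x\,\rmd y\, u(x,y)\,e^{i(\xi x+\xi' y)}$, to write
\begin{equation*}
u(t,t)=\frac{1}{(2\pi)^2}\int \rmd\xi\,\rmd\xi'\,\hat{u}(\xi,\xi')\,e^{-i(\xi+\xi')t}\,.
\end{equation*}
Multiplying by $f^2(t)$, integrating in $t$, and interchanging the order of integration (permissible because $\hat{u}$ is Schwartz) yields
\begin{equation*}
\int_\R \rmd t\, f^2(t)\, u(t,t)=\frac{1}{(2\pi)^2}\int \rmd\xi\,\rmd\xi'\,\hat{u}(\xi,\xi')\int_\R \rmd t\, f^2(t)\,e^{-i(\xi+\xi')t}\,,
\end{equation*}
and the inner $t$-integral is exactly $\widehat{f^2}(-(\xi+\xi'))$ by the paper's Fourier convention.

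Finally, the change of variable $\xi\mapsto -\xi$ (holding $\xi'$ fixed, with unit Jacobian) converts $\hat{u}(\xi,\xi')$ into $\hat{u}(-\xi,\xi')$ and $\widehat{f^2}(-(\xi+\xi'))$ into $\widehat{f^2}(\xi-\xi')$, producing exactly the right-hand side of (\ref{rhs}). There is essentially no obstacle: the identity is a Parseval-type statement whose proof is a textbook Fourier manipulation, and every integral in sight is absolutely convergent thanks to compact support. The real significance of the lemma is that it will act as the smooth template for a distributional analogue applied to $\W_\omega-{\,}^\psi\had^{(+)}_k$ restricted to the worldline, where the microlocal hypotheses provided by corollary \ref{dirac grading} together with Beal's restriction theorem \ref{beal} will be needed to justify the corresponding identity; that extension, rather than the present smooth case, is where the genuine work lies.
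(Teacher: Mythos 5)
Your computation is correct: Fourier inversion of $u$ on the diagonal, interchange of integration by Fubini (justified since all transforms are Schwartz), recognition of the inner $t$-integral as $\widehat{f^2}(-(\xi+\xi'))$, and the substitution $\xi\mapsto-\xi$ reproduce the right-hand side of (\ref{rhs}) exactly under the paper's convention $\hat{f}(\xi)=\int \rmd x\,f(x)\,e^{i\xi x}$. The paper itself does not prove lemma \ref{smooth splitting}; it is quoted verbatim from \cite{fewster verch}, so you have simply supplied the omitted (and essentially unique) Parseval-type argument. Your closing remark is also accurate: the substantive work in the paper is the relaxation of the smoothness hypothesis to $u\in H^s\cap\mathcal{E}'$ with $s>1$ in lemma \ref{dirac point split 1}, carried out by density and the Bounded Linear Transform theorem, and the microlocal input from corollary \ref{dirac grading} and Beal's theorem \ref{beal} that ensures the restricted, differentiated Hadamard-subtracted two-point function actually lands in such a Sobolev space.
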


This result forms the basis of the analysis in \cite{fewster verch, dawson fewster} where the authors relate $u$ to the energy density obtained from normal ordering, i.e. 
\begin{equation}\label{normal ordered energy}
\frac{1}{2}\bigg( \mathbbm{1} \otimes D - D\otimes\mathbbm{1}  \bigg)\vartheta^*\big( \W_\omega-\W_{\omega_0} \big)(t,t)
\end{equation}
 where $\omega_0$ is another Hadamard state of the Dirac field. Since the difference between any two two-point functions arising from Hadamard states is smooth the quantity (\ref{normal ordered energy}) is readily identifiable with $u$ in the hypothesis of lemma \ref{smooth splitting}. However, $\W_\omega - {\,}^\psi \had_1^{(+)}$ featuring in (\ref{Had regularised energy}) is not smooth so we need to relax the hypothesis of lemma \ref{smooth splitting} in order to proceed. In particular we wish to show that one has the conclusion of lemma \ref{smooth splitting} under the weaker assumption that $u\in H^{s}(\R\times\R)\cap \E'(\R\times\R)$ for $s>1$. Let $u\in C^\infty_0(\R\times\R)$; then, applying the H\"older inequality we have
\begin{eqnarray}
\int_\R \rmd t \, \big| f^2(t)u(t,t)  \big| &\leq& \norm f^2
\norm_{L^1(\R)} \, \sup_{t\in\R}|u(t,t)| \\
&\leq& \norm f^2 \norm_{L^1(\R)} \, \norm u \norm_{L^\infty(\R\times\R)} 
\end{eqnarray}
for all $u\in C^\infty_0(\R\times\R)$; we also remark that $\norm \cdot\norm_{L^\infty(\R\times\R)}$ is the natural norm on $C(\R\times\R)$. Since the embedding of
$H^{s}(\R\times\R)$ into $C(\R\times\R)$ is continuous for $s>1$ there
exists a constant $c>0$ such that $\norm u
\norm_{L^\infty(\R\times\R)}\leq c \norm
u\norm_{H^{s}(\R\times\R)}$ for all $u\in
H^{s}(\R\times\R)$ and $s>1$. Hence, 
\begin{equation}\label{BLT2}
\int_\R \rmd t \, |f^2(t)u(t,t)| \leq c\norm f^2
\norm_{L^1(\R)} \, \norm u \norm_{H^{s}(\R\times\R)}  \quad \forall u \in C^\infty_0(\R\times\R) \quad s>1 \, .
\end{equation}
Moreover, as $C^\infty_0(\R\times\R)$ is dense in $H^{s}(\R\times\R)$, the Bounded Linear Transform theorem implies that (\ref{BLT2}) holds for all $u\in H^{s}(\R\times\R)$ for $s>1$. Equally, we may apply the H\"older inequality to the right-hand-side of (\ref{rhs}) to obtain
\begin{equation}
\bigg| \int_{\R\times\R}\frac{\rmd\xi\rmd\xi'}{(2\pi)^2} \widehat{f^2}(\xi-\xi')\widehat{u}(-\xi,\xi') \bigg| \leq \norm F \norm_{L^\infty(\R\times\R)} \, \norm \widehat{u}\norm_{L^1(\R\times\R)} \, ,
\end{equation}
where $F(\xi,\xi')=\widehat{f^2}(\xi-\xi')$. A factor of
$(1+|\xi|^2+|\xi'|^2)^{(1+\varepsilon)/2}(1+|\xi|^2+|\xi'|^2)^{-(1+\varepsilon)/2}$
is then introduced into the $L^1$ norm and the Cauchy-Schwarz inequality
applied to obtain
\begin{eqnarray}\label{BLT1}
\norm \widehat{u}\norm_{L^1(\R\times\R)} &\leq& \sqrt{\frac{\pi}{\epsilon}} \norm u \norm_{H^{1+\varepsilon}(\R\times\R)} \, 
\end{eqnarray}
where we have written $s>1$ as $s=1+\epsilon$. Again the Bounded Linear Transform theorem implies that (\ref{BLT1}) holds for all $u\in H^{1+\epsilon}(\R\times\R)$ where the $\,\widehat{\,}\,$ now refers to the continuous extension of the Fourier transform to $H^{1+\epsilon}(\R\times\R)$ (although this must agree with the usual Fourier transform on $L^2(\R\times\R)$ or $\s'(\R\times\R)$). Therefore, since (\ref{rhs}) holds for a dense subset of $H^{1+\epsilon}(\R\times\R)$ and may be extended continuously onto the whole of the space, we have proven:

\begin{lemma}\label{dirac point split 1}
Let $f\in C^\infty_0(\R)$ and $u\in H^s(\R\times\R)\cap\E'(\R\times\R)$,
$s >1$, then the following identity holds:
\begin{equation}
\int_\R \rmd t \, f^2(t) u(t,t) = \int_{\R\times\R}
\frac{\rmd\xi\rmd\xi'}{(2\pi)^2}
\widehat{f^2}(\xi-\xi')\hat{u}(-\xi,\xi') \, .
\end{equation}
\end{lemma}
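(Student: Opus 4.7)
The plan is to extend the identity from lemma \ref{smooth splitting}, which is valid for $u\in C^\infty_0(\R\times\R)$, to all $u\in H^s(\R\times\R)\cap\E'(\R\times\R)$ with $s>1$ by a continuous-extension argument. Both sides of the identity will be shown to extend to continuous linear functionals of $u$ on $H^s(\R\times\R)$; since they agree on the dense subspace $C^\infty_0(\R\times\R)$, the Bounded Linear Transform theorem will force agreement throughout $H^s$, in particular on $H^s\cap\E'$.

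First I would estimate the left-hand side. For smooth $u$, H\"older's inequality gives $|\int_\R\rmd t\, f^2(t)u(t,t)|\leq \norm f^2\norm_{L^1(\R)}\,\norm u\norm_{L^\infty(\R\times\R)}$. Because the underlying dimension is $n=2$, the threshold $s>n/2$ in the Sobolev embedding of proposition \ref{Sobolev prop} is precisely $s>1$, so $H^s(\R\times\R)$ embeds continuously into $C(\R\times\R)$ for every $s>1$; the above bound then becomes a continuous estimate in $\norm u\norm_{H^s}$.

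Next I would estimate the right-hand side. Setting $F(\xi,\xi')=\widehat{f^2}(\xi-\xi')$, H\"older immediately gives a bound by $\norm F\norm_{L^\infty}\,\norm\hat u\norm_{L^1}$, and $\norm F\norm_{L^\infty}$ is finite since $f^2$ is smooth and compactly supported. To trade $\norm\hat u\norm_{L^1}$ for a Sobolev norm of $u$, I would insert the factor $(1+|\xi|^2+|\xi'|^2)^{(1+\varepsilon)/2}(1+|\xi|^2+|\xi'|^2)^{-(1+\varepsilon)/2}$ under the integral and apply Cauchy--Schwarz: the weighted piece produces $\norm u\norm_{H^{1+\varepsilon}}$, while the inverse weight integrates to a finite constant for any $\varepsilon>0$. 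Writing $s=1+\varepsilon$ then covers the entire range $s>1$.

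The main obstacle I expect is interpretational rather than technical: for generic $u\in H^s(\R\times\R)$ the Fourier transform $\hat u$ is a priori only a tempered distribution, so the right-hand side of the identity must be read through the continuous extension of the Fourier transform from $C^\infty_0(\R\times\R)$. This extension agrees with the usual distributional Fourier transform on both $\E'$ and $L^2$, so the integral $\int\widehat{f^2}(\xi-\xi')\hat u(-\xi,\xi')\,\rmd\xi\rmd\xi'$ is unambiguously defined once $\hat u\in L^1$. With both sides identified as continuous linear functionals on $H^s(\R\times\R)$ and coinciding on the dense subspace $C^\infty_0(\R\times\R)$, the BLT theorem then closes the argument.
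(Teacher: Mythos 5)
Your proposal is correct and follows essentially the same route as the paper: Hölder plus the Sobolev embedding $H^s(\R^2)\subset C(\R^2)$ for $s>1$ to control the diagonal trace, then Hölder, the weight $(1+|\xi|^2+|\xi'|^2)^{\pm(1+\varepsilon)/2}$, and Cauchy--Schwarz to bound $\norm\hat{u}\norm_{L^1}$ by $\norm u\norm_{H^{1+\varepsilon}}$, and finally a BLT/density extension from $C^\infty_0(\R\times\R)$. You also flag the same interpretational point the paper raises, namely that $\hat{u}$ must be read through the continuous extension of the Fourier transform to $H^{1+\varepsilon}$.
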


It is clear under the identification $u(t,t')=\langle \rho^\fin\rangle_\omega(\gamma(t))$ (cf. (\ref{Had regularised energy})) that
\begin{eqnarray}
&\,& \int_\R \rmd t \, f^2(t)\langle \rho^\fin\rangle_\omega (\gamma(t)) \nnnl &\,& \qquad = \frac{1}{2}\int_{\R\times\R}\frac{\rmd\xi \rmd\xi'}{(2\pi)^2} (\xi+\xi')\widehat{f^2}(\xi-\xi') \big[\vartheta^*( \W_\omega - {\,}^\psi \had^{(+)}_1  )\big]^\wedge(-\xi,\xi') \label{dirac int} \, .
\end{eqnarray}

We now prove a result (similar to lemma 5 of \cite{dawson fewster} in all but one detail of the proof\footnote{The distinct step is contained within line (\ref{distinction}).}) which will enable us to relate the right hand side of (\ref{dirac int}) to an integral over the diagonal:

\begin{lemma}\label{dirac point split 2}
If $f\in C^\infty_0(\R)$ is real valued and $u\in H^s(\R\times\R)$, $s>2$, is compactly
supported then 
\begin{eqnarray}
\int_{\R\times\R}\frac{\rmd\xi\,\rmd\xi'}{(2\pi)^2} \,
(\xi+\xi')\widehat{f^2}(\xi-\xi')\hat{u}(-\xi,\xi') =
\frac{1}{\pi}\int_\R \rmd \xi \, \xi \, u(\overline{f^\xi},f^\xi)
\end{eqnarray}
where $f^\xi(t)=e^{i\xi t}f(t)$. 
\end{lemma}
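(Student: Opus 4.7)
The plan is to first establish the identity for smooth, compactly supported $u$ by a direct Fourier calculation exploiting a symmetry of the integrand, and then extend to $u\in H^s(\R\times\R)\cap\E'(\R\times\R)$ for $s>2$ by density together with continuity estimates on both sides.

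For $u\in C^\infty_0(\R\times\R)$, I would first observe that
\begin{equation*}
u(\overline{f^\xi},f^\xi) = \iint u(t,t')f(t)f(t')e^{i\xi(t'-t)}\,\rmd t\,\rmd t' = \widehat V(-\xi,\xi),
\end{equation*}
where $V=u\cdot(f\otimes f)$. Expanding $\widehat V=(2\pi)^{-2}\hat u*(\hat f\otimes\hat f)$ and changing variable $\xi\mapsto a=-\xi-\mu_1$, the right-hand side of the claimed identity becomes
\begin{equation*}
\frac{1}{\pi(2\pi)^2}\int(-a-\mu_1)\hat u(\mu_1,\mu_2)\hat f(a)\hat f(-a-\mu_1-\mu_2)\,\rmd a\,\rmd\mu_1\,\rmd\mu_2.
\end{equation*}
The key observation is that under the reflection $a\mapsto -a-\mu_1-\mu_2$ the two $\hat f$-factors swap while the prefactor $(-a-\mu_1)$ becomes $(a+\mu_2)$; averaging the two representations replaces it by $\tfrac12(\mu_2-\mu_1)$, and the residual $a$-integration reduces to the convolution $(\hat f*\hat f)(-\mu_1-\mu_2)=2\pi\widehat{f^2}(-\mu_1-\mu_2)$. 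Relabelling $(\mu_1,\mu_2)=(-\xi,\xi')$ then reproduces the LHS.

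The extension requires that both sides are continuous in $u$ with respect to the $H^s$-norm. For the LHS, Cauchy--Schwarz against the weight $(1+\xi^2+\xi'^2)^{\pm s/2}$ together with the change of variables $\sigma=\xi+\xi'$, $\delta=\xi-\xi'$ and the Schwartz decay of $\widehat{f^2}$ bounds it by $C\|u\|_{H^s}$ for any $s>3/2$. For the RHS, I would compute $(\overline{f^\xi}\otimes f^\xi)^\wedge(\eta_1,\eta_2)=\hat f(\eta_1-\xi)\hat f(\eta_2+\xi)$ and apply Peetre's inequality with shift $(\eta_1-\xi,\eta_2+\xi)\mapsto(\eta_1,\eta_2)$, together with the Schwartz decay of $\hat f$, to obtain $\|\overline{f^\xi}\otimes f^\xi\|_{H^{-s}}\leq C_s(1+\xi^2)^{-s/2}$. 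Sobolev duality then gives
\begin{equation*}
\int_\R|\xi|\,|u(\overline{f^\xi},f^\xi)|\,\rmd\xi \leq C\|u\|_{H^s}\int_\R|\xi|(1+\xi^2)^{-s/2}\,\rmd\xi,
\end{equation*}
which is finite precisely when $s>2$. Approximating $u\in H^s\cap\E'$ by mollifications $u_n=\chi\cdot(u*\rho_n)\in C^\infty_0(\R\times\R)$ (with $\chi\equiv 1$ on a neighborhood of $\supp u$ and $\rho_n$ a standard mollifier), the supports remain in a fixed compact set and $u_n\to u$ in $H^s$ by property (iii) of Proposition \ref{Sobolev prop}; the continuity estimates then carry the identity from $u_n$ over to $u$.

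The main obstacle is pinning down the sharp regularity threshold $s>2$, which is forced by combining the $|\xi|$ factor in the RHS with the decay $\|\overline{f^\xi}\otimes f^\xi\|_{H^{-s}}\lesssim|\xi|^{-s}$. This Sobolev-dual estimate, and the accompanying use of Peetre's inequality, is, I expect, precisely the ``one detail'' distinguishing this argument from that of \cite{dawson fewster}, whose setting (differences of Hadamard states, hence smooth $u$) does not require such a sharp bound.
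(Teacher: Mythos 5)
Your proof is correct and takes a genuinely different route from the paper. The paper imports Lemma 6.1 of Fewster--Verch, namely $(\xi+\xi')\widehat{f^2}(\xi-\xi')=\tfrac{1}{\pi}\int\rmd\zeta\,\zeta\hat f(\xi-\zeta)\overline{\hat f(\xi'-\zeta)}$, substitutes it into the left-hand side to get a triple integral, and justifies reordering by Fubini after establishing absolute convergence; the key estimate is $\int\rmd\zeta\,|\zeta\hat f(\xi-\zeta)\overline{\hat f(\xi'-\zeta)}|\lesssim 2+|\xi|+|\xi'|$, followed by Cauchy--Schwarz to produce $\|p_s\|_{L^2}\|u\|_{H^s}$ with $p_s(\xi,\xi')=(2+|\xi|+|\xi'|)/(1+|\xi|^2+|\xi'|^2)^{s/2}$, whose $L^2$-finiteness forces $s>2$. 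You instead prove the identity first for $u\in C^\infty_0$ by a self-contained Fourier/convolution computation: writing $u(\overline{f^\xi},f^\xi)=\widehat{V}(-\xi,\xi)$ with $V=u\cdot(f\otimes f)$, applying the convolution theorem, and exploiting a reflection symmetry in the inner variable to symmetrise the prefactor and collapse to $\hat f*\hat f$. This in effect re-derives the Fewster--Verch identity rather than quoting it. You then extend to $u\in H^s\cap\E'$ by density with two-sided continuity estimates, the sharper of which is the Peetre/Sobolev-duality bound $\|\overline{f^\xi}\otimes f^\xi\|_{H^{-s}}\lesssim(1+\xi^2)^{-s/2}$, forcing $s>2$ via integrability of $|\xi|(1+\xi^2)^{-s/2}$. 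Both routes land on the same threshold $s>2$. Your density argument requires verifying continuity on each side separately but is clean and exposes where the threshold comes from; the paper's single Fubini argument applies to $u\in H^s$ directly without a density step but hides the threshold inside $\|p_s\|_{L^2}$. One small correction of attribution: the ``one detail'' the paper flags as distinguishing it from Dawson--Fewster is its own Cauchy--Schwarz estimate (the $\|p_s\|_{L^2}\|u\|_{H^s}$ line), not a Sobolev-duality/Peetre estimate of the kind you use; that estimate is your own alternative device and does not appear in the paper.
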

\begin{proof} Lemma 6.1 of \cite{fewster verch} states that
\begin{equation}
(\xi+\xi')\widehat{f^2}(\xi-\xi')=\frac{1}{\pi}\int_\R \rmd\zeta\,
\zeta\hat{f}(\xi-\zeta)\overline{\hat{f}(\xi'-\zeta)}
\end{equation}
and therefore
\begin{eqnarray}
&\,& \int_{\R\times\R}\frac{\rmd\xi\,\rmd\xi'}{(2\pi)^2}(\xi+\xi')\widehat{f^2}(\xi-\xi')\hat{u}(-\xi,\xi') \nnnl
&\,& \quad =
\frac{1}{\pi}\int_{\R\times\R}\frac{\rmd\xi\,\rmd\xi'}{(2\pi)^2}\int_\R\rmd\zeta
\, \zeta
\hat{f}(\xi-\zeta)\overline{\hat{f}(\xi'-\zeta)}\hat{u}(-\xi,\xi') \,
. \label{thingy}
\end{eqnarray}
 We also note, by a simple application of the convolution theorem, that 
\begin{eqnarray}
u(\overline{f^\xi},f^{\xi}) &=& [(f\otimes
f)u](e^{-i\xi\cdot},e^{i\xi\cdot}) \\
&=&
\int_{\R\times\R}\frac{\rmd\zeta\,\rmd\zeta'}{(2\pi)^2}\hat{f}(-\xi-\zeta)\hat{f}(\xi-\zeta')\hat{u}(\zeta,\zeta')
\\
&=& \int_{\R\times\R}\frac{\rmd\zeta \,
\rmd\zeta'}{(2\pi)^2}\hat{f}(-\xi+\zeta)\hat{f}(\xi-\zeta')\hat{u}(-\zeta,\zeta')
\\
&=&
\int_{\R\times\R}\frac{\rmd\zeta\,\rmd\zeta'}{(2\pi)^2}\,\hat{f}(\zeta-\xi)\overline{\hat{f}(\zeta'-\xi)}
\, \hat{u}(-\zeta,\zeta')
\end{eqnarray}
and, therefore, that the statement of the theorem will be established if
the integrals in (\ref{thingy}) can be reordered. If we estimate $|\hat{f}(x)|\leq
c/(1+|x|^2)$ then, by the arithmetic-geometric mean
inequality
\begin{eqnarray}
&\,& \int_\R \rmd\zeta \, \big| \zeta \hat{f}(\xi-\zeta)\overline{\hat{f}(\xi'-\zeta)} 
\big| \nnnl
&\,& \qquad \leq \frac{c^2}{2}\int_\R \rmd\zeta \,\bigg(
\frac{|\zeta|}{(1+|\xi-\zeta|^2)^2}+\frac{|\zeta|}{(1+|\xi'-\zeta|^2)^2} \bigg)\\
&\,& \qquad \leq \frac{c^2}{2}(2+|\xi\arctan\xi|+|\xi'\arctan\xi'|) \\
&\,& \qquad \leq \frac{c^2\pi}{4}(2+|\xi|+|\xi'|) \, .
\end{eqnarray}
Then
\begin{eqnarray}
&\,& \frac{1}{\pi}\int_{\R\times\R\times\R}
\frac{\rmd\xi\,\rmd\xi'}{(2\pi)^2} \, \rmd\zeta \, \bigg| \zeta \,
\hat{f}(\xi-\zeta)\overline{\hat{f}(\xi'-\zeta)}\hat{u}(-\xi,\xi')
\bigg| \nnnl 
&\,& \quad \leq \frac{c^2}{4}\int_{\R\times\R}
\frac{\rmd\xi\,\rmd\xi'}{(2\pi)^2} (2+|\xi|+|\xi'|) |\hat{u}(-\xi,\xi')|
\\
&\,& \quad \leq \frac{c^2}{16\pi^2}\norm p_s \norm_{L^2(\R\times\R)} \, \norm u \norm_{H^s(\R\times\R)} \label{distinction}
\end{eqnarray}
where we have used the Cauchy-Schwarz inequality and written
\begin{equation}
p_s(\xi,\xi')=\frac{2+|\xi|+|\xi'|}{(1+|\xi|^2+|\xi'|^2)^{s/2}} \, .
\end{equation}
The $L^1$ norm of $\zeta\hat{f}(\xi-\zeta)\overline{\hat{f}(\xi'-\zeta)}\hat{u}(-\xi,\xi')$ will be finite if $u\in H^s(\R\times\R)$ and $s>2$.
Under these conditions, Fubini's
theorem implies that the integrals can be reordered to obtain the
desired result.
\end{proof}

\section{A worldline absolute quantum weak energy inequality}\label{punch line}

We are now in a position to state our result concerning the Dirac field.

\begin{theorem}\label{main result}
Let $\omega$ be a Hadamard state for the Dirac field, $\gamma$ be a timelike worldline and $f\in C^\infty_0(\R)$ be real valued; then
\begin{equation}
\int_\R \, \rmd t \, f^2(t) \langle \rho^\fin \rangle_\omega (\gamma(t)) \geq - B
\end{equation}
where 
\begin{eqnarray}
B &=& \int_{\R^+}\frac{\rmd\xi}{2\pi}\xi \bigg[ f\otimes f \, \vartheta^*{\,}^\psi\had^{(+)}_4 
\bigg]^\wedge(-\xi,\xi) \nnnl
&\,& \quad - \int_{\R^-}\frac{\rmd\xi}{2\pi} \xi \bigg[ f\otimes f
\,\vartheta^*\big(i\mathrm{S}_\spin-{\,}^\psi\had^{(+)}_4 \big) \bigg]^\wedge(-\xi,\xi)
\end{eqnarray}
and $\vartheta =\gamma\otimes\gamma$.
\end{theorem}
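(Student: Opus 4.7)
The plan is to convert the left-hand side into a Fourier-space integral over a spectral variable $\xi$ and then to split into $\xi>0$ and $\xi<0$ pieces, on which positivity of $\vartheta^*\W_\omega$ and $\vartheta^*\W^\Gamma_\omega$ may be applied respectively.

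First I would trade the reference parametrix ${\,}^\psi\had^{(+)}_1$ in (\ref{Had regularised energy}) for the higher-order ${\,}^\psi\had^{(+)}_4$. Their difference is a sum of terms $v_j\sigma^j\ln(\sigma_+/\ell^2)$ and $w_j\sigma^j$ with $j\in\{2,3,4\}$, each of which restricts to $\gamma\times\gamma$ as a $C^3$-or-smoother function vanishing like $(t-t')^{2j}$; consequently, after applying the first-order operator $I\otimes D-D\otimes I$ and taking the coincidence limit, its contribution vanishes identically. Hence $\int_\R\rmd t\,f^2(t)\langle\rho^\fin\rangle_\omega(\gamma(t))$ can be computed with the $k=4$ subtraction in place of $k=1$. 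The benefit of doing so is Sobolev regularity: corollary \ref{dirac grading} gives $WF^s(\W_\omega-{\,}^\psi\had^{(+)}_4)=\emptyset$ for $s<11/2$, and since any null covector annihilated by the timelike $\dot\gamma$ must itself vanish one has $N^*\gamma\cap\mathcal{R}=\emptyset$, allowing Beal's theorem \ref{beal} to produce $\vartheta^*(\W_\omega-{\,}^\psi\had^{(+)}_4)\in H^s_\loc(\R\times\R)$ for all $s<5/2$. After multiplication by a smooth cutoff supported near $\supp f\times\supp f$ (which preserves the Sobolev class by proposition \ref{Sobolev prop}(iii)), this regularity is enough to invoke lemma \ref{dirac point split 1} on $\tfrac12(I\otimes D-D\otimes I)\vartheta^*(\W_\omega-{\,}^\psi\had^{(+)}_4)$, which loses one derivative yet still lies in $H^s$ for some $s\in(1,3/2)$, and then lemma \ref{dirac point split 2} on the undifferentiated restriction, which lies in $H^s$ for some $s\in(2,5/2)$. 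This produces the identity
\[
\int_\R\rmd t\,f^2(t)\langle\rho^\fin\rangle_\omega(\gamma(t))=\frac{1}{2\pi}\int_\R\rmd\xi\,\xi\,[\vartheta^*(\W_\omega-{\,}^\psi\had^{(+)}_4)](\overline{f^\xi},f^\xi).
\]

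Next I would split the $\xi$-integration. On $\xi>0$, positivity yields $\xi\,\vartheta^*\W_\omega(\overline{f^\xi},f^\xi)\geq 0$, where positivity for the scalar $\W_\omega$ descends from $\psi^+(\bar h E_A)^*=\psi(h E_A^+)$ (using $(\bar h E_A)^+=h E_A^+$) combined with lemma \ref{res pos} (applicable because the same conormal-bundle argument shows $N^*\gamma\times N^*\gamma$ is disjoint from the smooth wave-front set of $\W_\omega$); dropping this non-negative term leaves the lower bound $-\xi\,\vartheta^*{\,}^\psi\had^{(+)}_4(\overline{f^\xi},f^\xi)$ on $\R^+$. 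On $\xi<0$ I would substitute $\W_\omega=iS_\spin-\W^\Gamma_\omega$ (the scalar anticommutation relation) and discard the non-negative $-\xi\,\vartheta^*\W^\Gamma_\omega(\overline{f^\xi},f^\xi)$, giving the lower bound $\xi\,\vartheta^*(iS_\spin-{\,}^\psi\had^{(+)}_4)(\overline{f^\xi},f^\xi)$ on $\R^-$. Using the identity $u(\overline{f^\xi},f^\xi)=[f\otimes f\,u]^\wedge(-\xi,\xi)$, valid because $f$ is real, matches the two halves with the defining integrals of $B$ and produces exactly the lower bound $-B$.

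The main obstacle I anticipate is the Sobolev-index bookkeeping that pins $k=4$ as the minimal choice compatible with both applying the two point-splitting lemmas after Beal's restriction (at a cost of $m-n=3$ Sobolev derivatives) and having the coincidence-limit shift from $k=1$ to $k=4$ annihilated by $I\otimes D-D\otimes I$. A secondary subtlety is verifying that positivity is preserved through both the spinorial-to-scalar trace and the restriction $\vartheta^*$; the former rests on the spin-frame identity quoted above, while the latter is precisely the content of lemma \ref{res pos}.
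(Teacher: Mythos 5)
Your route is essentially the paper's: trade the $k=1$ Hadamard subtraction for $k=4$, use corollary~\ref{dirac grading} plus Beal's theorem (theorem~\ref{beal}) to obtain the Sobolev regularity $H^{1+\varepsilon}_\loc$ (after one differentiation) and $H^{2+\varepsilon}_\loc$ (undifferentiated) on $\R\times\R$, invoke lemmas~\ref{dirac point split 1} and~\ref{dirac point split 2} to rewrite the averaged energy as $\frac{1}{2\pi}\int\rmd\xi\,\xi\,[\vartheta^*(\W_\omega-{\,}^\psi\had^{(+)}_4)](\overline{f^\xi},f^\xi)$, split at $\xi=0$, feed in the scalar anticommutation relation $\W_\omega+\W^\Gamma_\omega=iS_\spin$ on $\xi<0$, and discard the non-negative $\W_\omega$, $\W^\Gamma_\omega$ terms. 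That is exactly the argument the paper gives, and your elaborations on the positivity descent (spinor $\to$ scalar via the Dirac-adjoint identity for $\bar{h}E_A$, then through $\vartheta^*$ via lemma~\ref{res pos}) and on why the $k=1\to k=4$ substitution is harmless in the coincidence limit are welcome clarifications that the paper glosses over (though your index range $j\in\{2,3,4\}$ should really read $j\in\{3,4,5\}$ at the scalar level $H^{(+)}_5-H^{(+)}_2$; the conclusion is unaffected).

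However, there is a genuine gap: you never show that $B$ is finite. A lower bound of $-\infty$ is vacuous, and the whole point of an absolute QWEI is that the bound be both state-independent \emph{and} finite. After discarding positive terms you arrive formally at $-B$, but the two integrals defining $B$ involve the pulled-back Hadamard parametrix, which is not rapidly decaying on its own, and the finiteness is not automatic. The paper closes this by re-expressing ${\,}^\psi\had^{(+)}_4=\W_{\omega_0}-F$ and $iS_\spin-{\,}^\psi\had^{(-)}_4=\W^\Gamma_{\omega_0}-G$ with $\omega_0$ an arbitrary auxiliary Hadamard state and $F,G\in C^4(X)$ (emphasising this is a purely technical device leaving the bound state-independent); then the $\W_{\omega_0}$, $\W^\Gamma_{\omega_0}$ pieces are controlled on $\R^\pm$ by the $C^{+-}$, $C^{-+}$ wave-front set inclusions (rapid decay in the integration direction), and the $F,G$ pieces are controlled by the estimate $|[f\otimes f\,\vartheta^*F]^\wedge(\xi,\xi')|\lesssim(1+|\xi|^2+|\xi'|^2)^{-2}$ coming from $C^4$-regularity. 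Without some version of this step your proof establishes only a formal identity, not a quantum energy inequality.
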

\begin{proof}
Corollary \ref{dirac grading} implies that $\W_\omega-{\,}^\psi \had_4^{(+)}
\in H^{5+\varepsilon}_\loc(X)$ from which it follows that
\begin{equation}\label{dirac suff reg}
f\otimes f\vartheta^*\bigg(\big(\mathbbm{1}\otimes ie_0\cdot\nabla
-ie_o\cdot\nabla \otimes \mathbbm{1}  \big) \big( \W_\omega-{\,}^\psi \had^{(+)}_4
\big)\bigg) \in H^{1+\varepsilon}_\loc(\R\times\R) \, ,
\end{equation}
where we have lost one Sobolev order as a result of differentiation and
a further three from the restriction to $\R\times\R$. Lemma \ref{dirac point split 1}, and the remarks following the proof, enable us to write
\begin{eqnarray}
&\,& \int_\R \rmd t \, f^2(t)\langle  \rho^\fin \rangle_\omega (\gamma(t)) \nnnl&\,& \qquad = \frac{1}{2}\int_{\R\times\R}\frac{\rmd\xi\rmd\xi'}{(2\pi)^2}(\xi+\xi')\widehat{f^2}(\xi-\xi')\big[ \vartheta^*(\W_\omega - {\,}^\psi\had^{(+)}_4)  \big]^\wedge (-\xi,\xi') \, .
\end{eqnarray}
and as $\vartheta^*(\W_\omega-{\,}^\psi\had^{(+)}_4)\in H^{2+\varepsilon}_\loc(\R\times\R)$ we can employ lemma \ref{dirac point split 2} to obtain:
\begin{equation}
\int_\R \rmd t \, f^2(t)\langle \rho^\fin\rangle_\omega(t) = \int_\R \frac{\rmd\xi}{2\pi}\xi \vartheta^*(\W_\omega-{\,}^\psi\had^{(+)}_4)(\overline{f^\xi},f^\xi) \, .
\end{equation}
We decompose the integral into its positive and negative frequency components and
appeal to the anticommutation relation (in scalar form) $\W_\omega+\W_\omega^\Gamma = i\mathrm{S}_\spin$
to write
\begin{eqnarray}
\int_\R \rmd t \, f^2(t) \langle \rho^\fin\rangle_\omega(t) &=& \int_{\R^+}\frac{\rmd\xi}{2\pi}\xi\vartheta^*\big(\W_\omega-{\,}^\psi\had^{(+)}_4\big)(\overline{f^\xi},f^\xi) \nnnl
&\,& \qquad + \int_{\R^-}\frac{\rmd\xi}{2\pi}\xi\vartheta^*\big(  i\mathrm{S}_\spin  -\W_\omega^\Gamma - {\,}^\psi\had^{(+)}_4 \big)(\overline{f^\xi},f^\xi) \, .
\end{eqnarray}
Recall that $\W_\omega$ and $\W^\Gamma_\omega$ are distributions of positive type, hence
\begin{eqnarray}
\int_\R \rmd t \, f^2(t) \langle \rho^\fin\rangle_\omega(t) &\geq& -\int_{\R^+}\frac{\rmd\xi}{2\pi}\xi\vartheta^*{\,}^\psi\had^{(+)}_4(\overline{f^\xi},f^\xi) \nnnl
&\,& \qquad + \int_{\R^-}\frac{\rmd\xi}{2\pi}\xi\vartheta^*\big(  i\mathrm{S}_\spin - {\,}^\psi\had^{(+)}_4 \big)(\overline{f^\xi},f^\xi) \, 
\end{eqnarray}
and it remains to show that this lower bound is finite. 

We make the replacements of ${\,}^\psi\had^{(+)}_4=\W_{\omega_0}-F$ and $i\mathrm{S}_\spin-{\,}^\psi\had^{(-)}_4=\W^\Gamma_{\omega_0}-G$ where $F,G\in C^4(X)$ and $\W_{\omega_0},\W^\Gamma_{\omega_0}$ arise from some arbitrary Hadamard
state $\omega_0$. We remark that this replacement is a technical device only which
we introduce to prove finiteness: \emph{The bound is still independent of any state}. Hence, we have
\begin{eqnarray}
B &=& \int_{\R^+}\frac{\rmd\xi}{2\pi} \xi \bigg[ f\otimes f \, \vartheta^*\W_{\omega_0}  \bigg]^\wedge(-\xi,\xi) -\int_{\R^-}\frac{\rmd\xi}{2\pi}\xi \bigg[ f\otimes f \, \vartheta^*\W^\Gamma_{\omega_0} \bigg]^\wedge(-\xi,\xi) \nnnl
&\,& \quad -\int_{\R^+}\frac{\rmd\xi}{2\pi}\xi\bigg[ f\otimes f \vartheta^*F \bigg]^\wedge(-\xi,\xi) + \int_{\R^-}\frac{\rmd\xi}{2\pi}\xi\bigg[ f\otimes f \vartheta^*G \bigg]^\wedge(-\xi,\xi) \, .
\end{eqnarray}
The finiteness of the $\W_{\omega_0}$ and $\W^\Gamma_{\omega_0}$ pieces is proven by the wave-front set conditions, $WF(\W_{\omega_0})\subset C^{+-}$ and $WF(\W^\Gamma_{\omega_0})\subset C^{-+}$, which imply\footnote{For the full details of this step the reader is directed to \S 3.2 of \cite{dawson fewster}.} that $\vartheta^*\W_{\omega_0}$ and $\vartheta^*\W_{\omega_0}^\Gamma$ are rapidly decaying in the directions they are being integrated in.  Finally, one may use the following estimates 
\begin{eqnarray}
[f\otimes f \, \vartheta^*F ]^\wedge(\xi,\xi') &\leq& \frac{c}{(1+|\xi|^2+|\xi'|^2)^2} \, , \\
\left[f\otimes f \, \vartheta^*G \right]^\wedge(\xi,\xi') &\leq& \frac{c'}{(1+|\xi|^2+|\xi'|^2)^2} \, .
\end{eqnarray}
Hence, our bound is finite.
\end{proof} 

Theorem \ref{main result} enables us to finally formulate our absolute QWEI for the Dirac field. Wald's uniqueness theorem implies that the regularised energy density $\langle \rho^\fin \rangle_\omega$ we have computed is equal to the renormalised energy density $\langle \rho^\textrm{ren}\rangle_\omega$ up to the addition of a local curvature term $C$. Hence, our result reads:
\begin{equation}
\int_\R \rmd t \, f^2(t) \, \langle \rho^\textrm{ren}\rangle_\omega (t) \geq - \mathcal{B}
\end{equation}
where $\mathcal{B}$ is given by 
\begin{eqnarray}
\mathcal{B} &=& \int_{\R^+}\frac{\rmd\xi}{2\pi}\xi \bigg[ f\otimes f \, \vartheta^*{\,}^\psi\had^{(+)}_4 
\bigg]^\wedge(-\xi,\xi) \nnnl
&\,& \quad - \int_{\R^-}\frac{\rmd\xi}{2\pi} \xi \bigg[ f\otimes f
\,\vartheta^*\big(i\mathrm{S}_\spin-{\,}^\psi\had^{(+)}_4 \big) \bigg]^\wedge(-\xi,\xi) \nnnl
&\,& \qquad - \int_\R \, \rmd t \, f^2(t)C(t) 
\end{eqnarray}

As reported in \cite{fewster smith}, where a more complete discussion of the renormalisation of the stress tensor of a quantum field may be found, the view may be held that the value of this curvature term (alongside the mass and curvature coupling) is an essential detail in the specification of the theory and that $C$ should be, at least in principle, measurable. Alternatively, one may hold the view that this unavoidable ambiguity is a manifestation of a breakdown of the semi-classical theory and that a more complete theory of quantum gravity is needed.

\section{Conclusion}

We have succeeded in proving a new absolute QWEI for the Dirac
field under general circumstances. By exploiting a Sobolev graded refinement of H\"ormander's wave-front set we have been able to modify the proof Fewster and Dawson
\cite{dawson fewster} give for their difference QWEI and remove any
reference to a state in the bound. Moreover, it is straightforward to
use the techniques of \cite{fewster smith} to obtain additional $WF^s$
information of the constituents of the Dirac Hadamard series ${\,}^\psi H^{(\pm)}_k$.

\begin{center}\textbf{Acknowledgements}\end{center}

The author would like to thank C.J. Fewster for his guidance over the course of this research. Additional thanks go to J.A. Sanders, L.W. Osterbrink, S.P. Dawson and P.~Watts for their helpful comments on the manuscript.

\end{document}